\newtheorem{theorem}{Theorem}
\newtheorem{lemma}[theorem]{Lemma}
\newtheorem{proposition}[theorem]{Proposition}
\theoremstyle{definition}
\newtheorem{remark}[theorem]{Remark}
\theoremstyle{remark}
\newcommand{\EEACUE}{\mathbb{E}_{\mathrm{ACUE}(N)}}
\newcommand{\EECUE}{\mathbb{E}_{\mathrm{CUE}(N)}}
\newcommand{\ACUE}{\mathrm{ACUE}}
\newcommand{\CUE}{\mathrm{CUE}}
\newfont{\cmbsy}{cmbsy10}
\newfont{\cmmib}{cmmib10}
\renewcommand{\Re}{\operatorname{Re}}
\newcommand{\diag}{\mathrm{diag}}
\newcommand{\Mod}[1]{\ (\mathrm{mod}\ #1)}
\newcommand{\Addresses}{{
		\footnotesize
		\bigskip
		\footnotesize
		
		\textsc{Department of Mathematics and Statistics, Queen's University, Kingston, Ontario, K7L 3N6, Canada}\par\nopagebreak
		\textit{E-mail address:} \texttt{brad.rodgers@queensu.ca}
		
		\medskip
		
		\textsc{Indian Institute of Technology Kharagpur, Kharagpur, West Bengal 721302, India}\par\nopagebreak
		\textit{E-mail address:} \texttt{vallabhaneniharshith@gmail.com}
		
}}
\keywords{Riemann zeta function, Alternative Hypothesis, random matrix theory, symmetric function theory}
\subjclass[2010]{11M06, 11M50, 15B52, 60B20}
    \title[Autocorrelations for the ACUE]{Autocorrelations of characteristic polynomials for the Alternative Circular Unitary Ensemble}
    \author{Brad Rodgers, Harshith Sai Vallabhaneni}
    \date{}
\begin{document}

\maketitle
\thispagestyle{empty}

\begin{abstract}
We find closed formulas for arbitrarily high mixed moments of characteristic polynomials of the Alternative Circular Unitary Ensemble (ACUE), as well as closed formulas for the averages of ratios of characteristic polynomials in this ensemble. A comparison is made to analogous results for the Circular Unitary Ensemble (CUE). Both moments and ratios are studied via symmetric function theory and a general formula of Borodin-Olshanski-Strahov.
\end{abstract}

\section{Introduction}
\label{sec:intro}

In this short note we examine mixed moments and averages of ratios of characteristic polynomials associated with the \emph{Alternative Circular Unitary Ensemble (ACUE)}. Our main results are a closed formula for arbitrarily high mixed moments in Theorem \ref{thm:main} and a closed formula for averages of ratios in Theorem \ref{thm:BOS}. The ACUE refers to a certain random collection of points on the unit circle of the complex plane whose distribution is meant to mimic the points of the Circular Unitary Ensemble (CUE) of random matrix theory. Let us use the notation
$$
\Delta(x_1,...,x_N):= \prod_{1 \leq j < k \leq N}(x_j-x_k),
$$
\sloppy for a Vandermonde determinant\footnote{Note that some authors define the Vandermonde determinant in such a way as to have the opposite sign, but we will be consistent with the notational convention above.}, an anti-symmetric function in the variables $x_1,...,x_N$, and let us also use the notation $e(t):= e^{i2\pi t}$. For an integer $N \geq 1$, we use the label $\ACUE(N)$ to denote the random collection of $N$ points $\{e(\vartheta_1),...,e(\vartheta_N)\}$ on the unit circle $S^1$ of the complex plane which have the following joint density: for an arbitrary measurable function $f: (S^1)^N \rightarrow \mathbb{C}$,
\begin{multline}
\label{eq:ACUEdef}
\EEACUE \big[f(e(\vartheta_1),...,e(\vartheta_N))\big] \\
= \frac{1}{N!} \frac{1}{(2N)^N}\sum_{t_1,...,t_N} f(e(t_1),...,e(t_N)) \cdot |\Delta(e(t_1),...,e(t_N))|^2,
\end{multline}
where each index $t_i$ is summed over the set $\{0,\tfrac{1}{2N}, \tfrac{2}{2N},...,\tfrac{2N-1}{2N}\}$ (so that the sum consists of $(2N)^N$ terms in total). Likewise we use the label $\CUE(N)$ to denote the random collection of $N$ points $\{e(\theta_1),...,e(\theta_N)\}$ on the unit circle with joint density given by:
\begin{multline}
\label{eq:CUEdef}
\mathbb{E}_{\mathrm{CUE}(N)} \big[f(e(\theta_1),...,e(\theta_N))\big]\\ 
= \frac{1}{N!} \int_{[0,1]^N} f(e(t_1),...,e(t_N)) \cdot |\Delta(e(t_1),...,e(t_N))|^2\, d^Nt.
\end{multline}
It is known (see \cite[Eq. (21)]{Ta}) that $\EEACUE [1] = \EECUE [1] = 1$ so both these expressions indeed implicitly define joint probability densities. These joint densities are each symmetric in all variables, so the $\ACUE(N)$ and the $\CUE(N)$ may be seen as point processes supported on the $2N$-th roots of unity or the unit circle of the complex plane respectively. We use the notation $\mathbb{E}_{\mathrm{ACUE}(N)}$ or $\mathbb{E}_{\mathrm{CUE}(N)}$ for the purpose of reminding the reader over which ensemble an expectation is being taken. (These could be replaced by the more traditional notation $\mathbb{E}$ with no change in meaning.)

The ACUE was put forward in a blog post of T. Tao \cite{Ta} in order to investigate the limitations of certain methods in analytic number theory. Of particular interest was a comparison of the $k$-level correlation functions of the ACUE and the CUE. The CUE can be seen as a finite model of how zeros of the Riemann zeta function and other L-functions are conjectured to be spaced, while the ACUE can be seen as a finite model of how zeros are very unlikely to be spaced but which cannot be ruled out by current methods. A similar construction (replacing the CUE and ACUE with limiting point processes) was independently studied by J. Lagarias and the first author of this paper \cite{LaRo1} around the same time.  Related point processes have also been studied for reasons unrelated to number theory in the past; see e.g. \cite{Bo,BoOkOl}.

It is therefore of interest to investigate similarities and differences between the CUE and ACUE. In this paper we examine the statistics induced by characteristic polynomials associated to the CUE and ACUE. The CUE is naturally associated to the eigenvalues of a random Haar distributed unitary matrix, but there is not an especially natural matrix interpretation for the ACUE (though see Remark 6 of \cite{Ta}). In order to easily speak of the characteristic polynomial associated to these ensembles, define the diagonal matrices
\begin{align*}
g&:= \diag(e(\vartheta_1),...,e(\vartheta_N)) \qquad &&\textrm{(associated to $\ACUE(N)$)}\\
G&:= \diag(e(\theta_1),...,e(\theta_N)) \qquad &&\textrm{(associated to $\CUE(N)$)}.
\end{align*}
We refer to the random functions $\det(1-zg)$ and $\det(1-zG)$ in the complex variable $z$ as \emph{the characteristic polynomials} associated with the ACUE and CUE respectively. Note that $\det(1-zG)$ will have the same distribution as if $G$ were a random unitary matrix chosen according to Haar measure. 

A purpose in this paper is to examine mixed moments of characteristic polynomials from the ACUE. In his blog post (see Remark 7), Tao made the remarkable observation that for quite large powers, moments of characteristic polynomials associated to $\ACUE(N)$ and $\CUE(N)$ agree:

\begin{theorem}[Tao]
\label{thm:Tao}
For positive integers $K, L \leq N$,
\begin{multline*}
\mathbb{E}_{\mathrm{ACUE}(N)} \Big[\prod_{k=1}^K \overline{\det(1-u_k g)} \prod_{\ell=1}^L \det(1-v_l g) \Big]\\
= \mathbb{E}_{\mathrm{CUE}(N)}\Big[ \prod_{k=1}^K \overline{\det(1-u_k G)} \prod_{\ell=1}^L \det(1-v_l G) \Big].
\end{multline*}
\end{theorem}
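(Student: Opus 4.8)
The plan is to expand every characteristic polynomial into Schur functions and to reduce the asserted identity to a single orthogonality relation among Schur polynomials which survives the passage from the CUE to the ACUE in the range of partitions that actually occurs. The reason it survives is that the orthogonality of the additive characters of $\Z/2N\Z$ agrees with that of the characters of $\R/\Z$ across any band of $2N$ consecutive frequencies, and the relevant frequencies will all lie in such a band precisely when $K,L\le N$.

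First I would write $\det(1-v_\ell g)=\prod_{j=1}^N\big(1-v_\ell\,e(\vartheta_j)\big)$ and $\overline{\det(1-u_k g)}=\prod_{j=1}^N\big(1-\bar u_k\,e(-\vartheta_j)\big)$, and apply the dual Cauchy identity
\[
\prod_{i,j}(1-x_iy_j)=\sum_{\lambda}(-1)^{|\lambda|}\,s_\lambda(x)\,s_{\lambda'}(y)
\]
separately to $\prod_{\ell=1}^L\det(1-v_\ell g)$ (taking the $y$'s to be the $N$ spectral variables $e(\vartheta_j)$, the $x$'s to be $v_1,\dots,v_L$) and to $\prod_{k=1}^K\overline{\det(1-u_k g)}$ (with $y$'s the $e(-\vartheta_j)$ and $x$'s the $\bar u_k$). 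Since a Schur polynomial in the $N$ spectral variables vanishes unless its partition has at most $N$ parts, and $s_\lambda$ in $L$ (resp.\ $K$) variables vanishes unless $\lambda$ has at most $L$ (resp.\ $K$) parts, and since $s_\mu(\bar u)=\overline{s_\mu(u)}$ and $s_{\mu'}(e(-\vartheta_1),\dots,e(-\vartheta_N))=\overline{s_{\mu'}(e(\vartheta_1),\dots,e(\vartheta_N))}$, multiplying the two expansions yields
\[
\prod_{k=1}^K\overline{\det(1-u_k g)}\ \prod_{\ell=1}^L\det(1-v_\ell g)=\sum_{\lambda,\mu}(-1)^{|\lambda|+|\mu|}\,s_\lambda(v_1,\dots,v_L)\,\overline{s_\mu(u_1,\dots,u_K)}\;s_{\lambda'}(g)\,\overline{s_{\mu'}(g)},
\]
a finite sum over partitions $\lambda$ (at most $L$ parts, each part $\le N$) and $\mu$ (at most $K$ parts, each part $\le N$), where $s_{\lambda'}(g)$ denotes the Schur polynomial at the diagonal entries of $g$; the same expansion holds verbatim with $g$ replaced by $G$. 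Because $K,L\le N$, each of $\lambda'$ and $\mu'$ is a partition contained in the $N\times N$ box. This is the only place the hypothesis enters, and keeping track of which partitions remain inside that box is, I expect, the one step that needs genuine care.

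It then remains to show that, for partitions $\nu,\rho$ contained in the $N\times N$ box,
\[
\EEACUE\big[s_\nu(g)\,\overline{s_\rho(g)}\big]=\delta_{\nu\rho}=\EECUE\big[s_\nu(G)\,\overline{s_\rho(G)}\big].
\]
For this I would use the bialternant (Weyl) formula $s_\nu(x)\,\Delta(x)=\det\!\big(x_i^{\,\nu_j+N-j}\big)_{i,j=1}^{N}$, which gives $|\Delta(x)|^2\,s_\nu(x)\,\overline{s_\rho(x)}=\det\!\big(x_i^{\,\nu_j+N-j}\big)\cdot\overline{\det\!\big(x_i^{\,\rho_j+N-j}\big)}$. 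Expanding both determinants over $S_N$ and substituting $x_i=e(t_i)$, the ACUE average factors into a product over $i$ of one-dimensional sums $\tfrac1{2N}\sum_{m=0}^{2N-1}e\big(md/(2N)\big)$ with $d=\big(\nu_{\sigma(i)}+N-\sigma(i)\big)-\big(\rho_{\tau(i)}+N-\tau(i)\big)$. Since $\nu,\rho$ lie in the $N\times N$ box, every exponent $\nu_j+N-j$ (and $\rho_j+N-j$) lies in $\{0,1,\dots,2N-1\}$, so $|d|\le 2N-1$ and thus $d\equiv 0\pmod{2N}$ forces $d=0$; the one-dimensional sum therefore equals $\mathbf{1}[d=0]$, which is exactly $\int_0^1 e(td)\,dt$, the quantity that arises in the corresponding CUE computation. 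Hence the ACUE and CUE averages coincide term by term. Their common value is $\delta_{\nu\rho}$: the exponents $\nu_j+N-j$ strictly decrease in $j$ (and likewise for $\rho$), so a surviving term requires $\{\nu_j+N-j\}_j=\{\rho_j+N-j\}_j$ as sets, i.e.\ $\nu=\rho$ and $\sigma=\tau$, contributing $N!$ altogether against the prefactor $1/N!$.

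Substituting this orthogonality relation into the expansion above --- the sum there being finite, so that the expectation passes through it --- collapses $\mu'=\lambda'$, hence $\mu=\lambda$ (the sign $(-1)^{|\lambda|+|\mu|}$ becoming $+1$), and leaves both the ACUE and the CUE average equal to the same finite sum $\sum_{\lambda}s_\lambda(v_1,\dots,v_L)\,\overline{s_\lambda(u_1,\dots,u_K)}$, the sum running over partitions $\lambda$ with at most $\min(K,L)$ parts, each $\le N$. This establishes the claimed identity.
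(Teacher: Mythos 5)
Your proof is correct, and it takes a genuinely different route from the paper. The paper does not prove Theorem \ref{thm:Tao} directly: it derives the closed ACUE moment formula of Theorem \ref{thm:main} (via the Borodin--Olshanski--Strahov ratio identity, the hook-Schur expectation of Lemma \ref{lem:hookschur}, and determinantal condensation), and then observes that for $K,L\leq N$ the correction terms $H_{N,\ell}$ vanish, so the determinant coincides with the Bump--Gamburd CUE formula of Theorem \ref{thm:BumpGamburd}. You instead argue directly: the dual Cauchy identity expands both products into Schur polynomials indexed by partitions whose conjugates lie in the $N\times N$ box precisely because $K,L\le N$, and for such partitions the orthogonality relation $\EE\big[s_\nu\,\overline{s_\rho}\big]=\delta_{\nu\rho}$ holds for the ACUE just as for the CUE, since the exponent differences $(\nu_j+N-j)-(\rho_{j'}+N-j')$ are bounded in absolute value by $2N-1$, so congruence mod $2N$ forces equality; this is the same mechanism as the paper's permutation computation \eqref{eq:permutationsum}, but deployed for general box partitions rather than hooks. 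Your argument is more elementary and self-contained for this particular statement (and also exhibits the common value $\sum_\lambda s_\lambda(v_1,\dots,v_L)\,\overline{s_\lambda(u_1,\dots,u_K)}$ explicitly, in the spirit of Bump--Gamburd's own proof for the CUE); what it does not give is the paper's main point, namely the closed formulas of Theorems \ref{thm:main} and \ref{thm:ratios} that remain valid when $K$ or $L$ exceeds $N$ and quantify exactly how the two ensembles then diverge.
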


This allows one to compute a large range of moments for the ACUE using known results for the CUE. Nonetheless it is interesting to ask if a closed formula can be found that allows for the computation of all moments, and this is a main result of this paper.

In order to state it, for an integer $\ell$ and positive integer $m$, we introduce the notation $[\ell]_m \in \{0,1,...,m-1\}$ to be the reduction of $\ell$ modulo $m$, and define the function
\begin{equation}
\label{eq:H_def}
H_{N,\ell}(v):= \begin{cases} 0 & \textrm{if}\; 0 \leq [\ell]_{2N} \leq N-1 \\ v^{[\ell]_{2N} - N} & \textrm{if}\; N \leq [\ell]_{2N} \leq 2N-1. \end{cases}
\end{equation}

\begin{theorem}
\label{thm:main}
For $N, K, L \geq 1$, and $v_1,...,v_{K+L} \in \mathbb{C}$,
\begin{equation*}
\mathbb{E}_{\mathrm{ACUE}(N)} \Big[ \det(g)^{-K} \prod_{k=1}^{K+L} \det(1+ v_k g) \Big] = \frac{\det\big(\phi_i(v_j)\big)_{i,j=1}^{K+L}}{\Delta_{K+L}(v_1,...,v_{K+L})},
\end{equation*}
where
$$
\phi_i(v) = \phi_i^{(K,L;N)}(v) := \begin{cases} v^{N+L+K-i} - v^L \, H_{N,\, K-i}(v) & \textrm{for}\; 1 \leq i \leq K \\ v^{K+L-i} - v^{L+N-1} \, H_{N,\, i-K-1}(1/v) & \textrm{for}\; K+1 \leq i \leq K+L. \end{cases}
$$
\end{theorem}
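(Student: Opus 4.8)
The plan is to reduce the average to an $N\times N$ determinant by the Andr\'eief (Heine) identity, to recognise its entries as ``aliased'' elementary symmetric polynomials, and then to carry out a purely algebraic manipulation turning that determinant into the claimed ratio over $\Delta_{K+L}$.

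First I would write $\det(g)^{-K}\prod_{k=1}^{K+L}\det(1+v_kg)=\prod_{j=1}^N w(e(\vartheta_j))$, where $w(x):=x^{-K}\prod_k(1+v_kx)=\sum_{s=0}^{K+L}e_s(v)\,x^{s-K}$, with $e_s(v)=e_s(v_1,\dots,v_{K+L})$ the elementary symmetric polynomial (and $e_s(v)=0$ for $s<0$ or $s>K+L$). Substituting $|\Delta(e(t_1),\dots,e(t_N))|^2=\sum_{\sigma,\tau\in S_N}\operatorname{sgn}(\sigma)\operatorname{sgn}(\tau)\prod_j e\big(t_j(\tau(j)-\sigma(j))\big)$ into \eqref{eq:ACUEdef} and using the discrete orthogonality $\tfrac1{2N}\sum_{m=0}^{2N-1}e(mr/2N)=\ind[\,r\equiv 0\ (\mathrm{mod}\ 2N)\,]$, the factor $1/N!$ cancels a sum over one copy of $S_N$ in the usual way, leaving
\begin{equation*}
\mathbb{E}_{\mathrm{ACUE}(N)}\Big[\det(g)^{-K}\prod_{k=1}^{K+L}\det(1+v_kg)\Big]=\det\big(M_{ab}\big)_{a,b=1}^N,
\end{equation*}
where $M_{ab}$ is the sum of $e_s(v)$ over all $s$ with $0\le s\le K+L$ and $s\equiv K+a-b\ (\mathrm{mod}\ 2N)$ --- equivalently, $M_{ab}$ is the coefficient of $x^{[K+a-b]_{2N}}$ in $\prod_k(1+v_kx)$ reduced modulo $x^{2N}-1$, which is natural since the ACUE lives on the $2N$-th roots of unity. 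When $K,L\le N$ the congruence forces $s=K+a-b$, so $M_{ab}=e_{K+a-b}(v)$ and $\det(M_{ab})$ is the dual Jacobi--Trudi Schur polynomial $s_{(N^K)}(v_1,\dots,v_{K+L})$, whose bialternant form is exactly the right-hand side of the theorem with all $H_{N,\cdot}\equiv 0$; this already recovers Theorem \ref{thm:Tao}.

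For the general case I would expand $\det(M_{ab})=\sum_{\vec m\in\mathbb{Z}^N}\det\big(e_{K+a-b+2Nm_a}(v)\big)_{a,b=1}^N$ by multilinearity across rows. After reversing the columns and applying the Jacobi--Trudi straightening law, each summand equals (up to a fixed global sign) $\operatorname{sgn}(w_{\vec m})\,s_{\mu(\vec m)}(v_1,\dots,v_{K+L})$, where the exponents $\gamma_a:=K+a-1+2Nm_a$ produced by the column reversal are automatically pairwise distinct, so the summand vanishes unless $0\le\gamma_a\le K+L+N-1$ for every $a$; here $w_{\vec m}$ is the permutation sorting $(\gamma_a)$ into decreasing order and $\mu(\vec m)$ the resulting partition. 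Thus $\det(M_{ab})$ is a finite signed sum of Schur polynomials $s_\mu(v)$. Symmetrically, expanding the numerator $\det(\phi_i(v_j))_{i,j=1}^{K+L}$ by multilinearity across rows (each $\phi_i$ being a difference of at most two monomials, by \eqref{eq:H_def}) and dividing by $\Delta_{K+L}(v)$ presents the right-hand side of the theorem as a finite signed sum $\sum_T(-1)^{|T|}\operatorname{sgn}(\sigma_T)\,s_{\nu_T}(v)$ of Schur polynomials, the sum running over subsets $T$ of the rows that ``use their correction term''. The theorem then amounts to matching these two expansions.

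The structural fact that makes the matching possible --- and the crux of the argument --- is that every correction monomial $v^L H_{N,\,K-i}(v)$ or $v^{L+N-1}H_{N,\,i-K-1}(1/v)$ occurring in $\phi_i$ is a single power $v^e$ whose exponent lies in the window $\{L,L+1,\dots,L+N-1\}$ of length exactly $N$, disjoint from the exponents $\{0,\dots,L-1\}\cup\{N+L,\dots,N+L+K-1\}$ of the leading monomials; dually, each coordinate $m_a$ of $\vec m$ records which representative of a residue class is chosen at ``level'' $a\in\{1,\dots,N\}$, and the bound $0\le\gamma_a\le K+L+N-1$ mirrors exactly the range of admissible exponents $e$. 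I expect the main obstacle to be the bookkeeping in the heavily-aliased regime where $K$ or $L$ greatly exceeds $N$: there one must construct the bijection between admissible $\vec m$ and admissible subsets $T$ explicitly and check that it is sign-compatible, i.e.\ that $\operatorname{sgn}(w_{\vec m})$ (up to the global sign) agrees with $(-1)^{|T|}\operatorname{sgn}(\sigma_T)$ --- this sign matching, together with tracking how the reduction $[\,\cdot\,]_{2N}$ in \eqref{eq:H_def} places each correction exponent into the window $\{L,\dots,L+N-1\}$, is the delicate point.
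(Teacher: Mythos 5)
Your reduction is a genuinely different route from the paper (which deduces Theorem \ref{thm:main} by letting $u_i\to\infty$ and $u'_i\to 0$ in the ratio formula of Theorem \ref{thm:ratios}, via the condensation Lemma \ref{lem:condense}, Lemma \ref{lem:derivatives} and the Cauchy determinant), and the first half of it is sound: the Andr\'eief/Heine step giving $\det(M_{ab})_{a,b=1}^N$ with aliased entries $M_{ab}=\sum_{s\equiv K+a-b\ (\mathrm{mod}\ 2N)}e_s(v)$ is correct for a determinantal weight supported on the $2N$-th roots of unity, and in the regime $K,L\le N$ the congruence does pin down $s=K+a-b$, so the dual Jacobi--Trudi identity gives $s_{\langle N^K\rangle}(v)$ and recovers Theorems \ref{thm:Tao} and \ref{thm:BumpGamburd}. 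Your observations about the exponent windows ($\{L,\dots,L+N-1\}$ for the correction monomials, disjoint from the exponents of the leading monomials) are also accurate.

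However, there is a genuine gap exactly where the new content of the theorem lies. After expanding $\det(M_{ab})$ over $\vec m\in\mathbb{Z}^N$ and straightening, and after expanding $\det(\phi_i(v_j))/\Delta_{K+L}(v)$ over subsets $T$ of rows, you are left with the assertion that two signed sums of Schur polynomials coincide; you do not construct the claimed sign-compatible bijection between admissible $\vec m$ and admissible $T$, and you yourself flag this as ``the delicate point.'' This is not a routine bookkeeping step: it is essentially equivalent to the identity being proved, and it is not even clear a priori that the matching is a term-by-term bijection rather than an equality holding only after internal cancellations (distinct $\vec m$ can straighten to the same partition with opposite signs, and distinct $T$ can produce coincident exponent sets that vanish or collide), so the intended correspondence would first have to be shown to be well defined. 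In addition, the straightening of $\det\big(e_{K+a-b+2Nm_a}(v)\big)$ in $K+L$ variables with only $N$ rows needs care (vanishing comes both from $e_s=0$ outside $0\le s\le K+L$ and from collisions of shifted indices, and the resulting partitions must be checked to have at most $N$ columns before the bialternant comparison with the right-hand side is legitimate). As it stands, the argument establishes the theorem only in the already-known range $K,L\le N$; to complete it you would need to carry out the combinatorial matching explicitly, or else fall back on an analytic route such as the paper's limit argument from the ratio formula.
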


Note that for associated to $\ACUE(N)$ (or indeed any unitary matrix $g$) we have 
\begin{equation}
\label{eq:func_eq}
\det(g)^{-1}\det(1+vg) = v^N \det(1+v^{-1} \overline{g}),
\end{equation}
so that this formula indeed allows for the computation of mixed moments of characteristic polynomials and their conjugates.

This should be compared to the analogous result for the CUE; we state this result in the formalism of Bump-Gamburd \cite{BuGa}.

\begin{theorem}[Prop. 4 of \cite{BuGa}]
\label{thm:BumpGamburd}
For $N, K, L \geq 1$,
\begin{align*}
\mathbb{E}_{\mathrm{CUE}(N)} \Big[\det(G)^{-K} \prod_{k=1}^{K+L} \det(1+ v_k G) \Big]&= s_{\langle N^k \rangle}(v_1,...,v_{K+L}) \\ 
&= \frac{\det\big(\psi_i(v_j)\big)_{i,j=1}^{K+L}}{\Delta_{K+L}(v_1,...,v_{K+L})},
\end{align*}
where
$$
\psi_i(v) = \psi_i^{(K,L;N)}(v) := \begin{cases} v^{N+L+K-i} & \textrm{for}\; 1 \leq i \leq K \\ v^{K+L-i} & \textrm{for}\; K+1 \leq i \leq K+L \end{cases}.
$$
\end{theorem}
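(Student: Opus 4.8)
\emph{Plan.} I would prove the two displayed equalities separately; the second is essentially formal and the first carries the content. The second equality $s_{\langle N^K\rangle}(v_1,\ldots,v_{K+L}) = \det(\psi_i(v_j))_{i,j=1}^{K+L}/\Delta_{K+L}(v_1,\ldots,v_{K+L})$ is nothing but the bialternant (Weyl character) formula: for any partition $\lambda$ with at most $n$ parts,
$$
s_\lambda(v_1,\ldots,v_n) = \frac{\det\big(v_j^{\lambda_i + n - i}\big)_{i,j=1}^n}{\det\big(v_j^{\,n-i}\big)_{i,j=1}^n}.
$$
With the sign convention for $\Delta$ fixed in Section~\ref{sec:intro} the denominator equals $\Delta_n(v_1,\ldots,v_n)$ exactly — reversing the $n$ rows contributes a factor $(-1)^{\binom{n}{2}}$ which cancels the factor relating $\prod_{j<k}(v_j-v_k)$ to $\prod_{j<k}(v_k-v_j)$. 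Specializing to $n = K+L$ and $\lambda = \langle N^K\rangle$ (so $\lambda_i = N$ for $1\le i\le K$ and $\lambda_i = 0$ for $K+1\le i\le K+L$), the exponents $\lambda_i + (K+L) - i$ are precisely $N+L+K-i$ for $i\le K$ and $K+L-i$ for $i>K$, i.e. the functions $\psi_i$, which is the asserted identity.

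\emph{The first equality.} Here I would reduce to a Toeplitz determinant by Heine's identity. The integrand is a product $\prod_{j=1}^N h(e(\theta_j))$ over the eigenphases, where
$$
h(x) := x^{-K}\prod_{k=1}^{K+L}(1+v_kx) = \sum_{m} e_{m+K}(v_1,\ldots,v_{K+L})\,x^m ,
$$
with $e_r$ the $r$-th elementary symmetric polynomial (and $e_r = 0$ unless $0\le r\le K+L$). Writing $|\Delta(e(t_1),\ldots,e(t_N))|^2 = \det\big(e(t_j)^{i-1}\big)_{i,j}\,\overline{\det\big(e(t_j)^{i-1}\big)_{i,j}}$ and applying the Andréief (Heine) identity to the density \eqref{eq:CUEdef} yields
$$
\EECUE\Big[\det(G)^{-K}\prod_{k=1}^{K+L}\det(1+v_kG)\Big] = \det\big(e_{(j-i)+K}(v_1,\ldots,v_{K+L})\big)_{i,j=1}^N .
$$

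It then remains to recognize this $N\times N$ determinant as $s_{\langle N^K\rangle}$. That is the dual Jacobi–Trudi identity $s_\lambda = \det\big(e_{\lambda'_i - i + j}\big)_{i,j=1}^{m}$, valid for $m \ge \lambda_1$: matching the entry $e_{(j-i)+K}$ forces $\lambda'_i = K$ for every $1\le i\le N$, hence $\lambda' = \langle K^N\rangle$, i.e. $\lambda = \langle N^K\rangle$, whose first part $N$ is exactly the matrix size. Combined with the bialternant step this proves the theorem. (Alternatively, and closer to the derivation in \cite{BuGa}, one can skip the Toeplitz reduction: expand $\prod_{j,k}(1+v_k\lambda_j(G)) = \sum_\mu s_\mu(\lambda_1(G),\ldots,\lambda_N(G))\,s_{\mu'}(v_1,\ldots,v_{K+L})$ by the dual Cauchy identity, observe that $\det(G)^{-K}s_\mu(G)$ is the character of the irreducible rational $U(N)$-representation of highest weight $\mu - (K,\ldots,K)$, and use $\int_{U(N)}\chi_\nu\,dU = \ind[\nu = 0]$, which forces $\mu = \langle K^N\rangle$ and leaves $s_{(\langle K^N\rangle)'}(v) = s_{\langle N^K\rangle}(v)$.)

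\emph{Main obstacle.} Since every ingredient is classical, there is no genuine analytic difficulty; the only thing that needs care is the bookkeeping of index ranges — verifying that the matrix produced by Heine's identity is \emph{exactly} the dual Jacobi–Trudi matrix of the rectangular partition $\langle N^K\rangle$ (this is where the precise shape $\langle N^K\rangle$, rather than some truncation of it, is forced, and where the hypotheses $N,K,L\ge 1$ enter), and checking that the Vandermonde normalization in the bialternant formula is compatible with the convention for $\Delta$ adopted in Section~\ref{sec:intro}.
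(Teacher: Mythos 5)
Your argument is correct, but it is worth noting that the paper does not actually prove Theorem \ref{thm:BumpGamburd}: it is quoted from Prop.\ 4 of \cite{BuGa}, and the only proof strategy the paper alludes to is that ``the same method'' used to derive Theorem \ref{thm:main} from Theorem \ref{thm:ratios} would deduce Theorem \ref{thm:BumpGamburd} from the CUE ratio formula, Theorem \ref{thm:ratios_CUE} --- i.e.\ send $K$ of the $u_i$ to $\infty$ and $L$ of them to $0$, using the Cauchy determinant (Lemma \ref{lem:Cauchy}) and the condensation identity (Lemma \ref{lem:condense}) to collapse the resulting determinant. Your route is genuinely different and entirely classical: Andr\'eief/Heine applied to \eqref{eq:CUEdef} turns the average into the Toeplitz determinant $\det\big(e_{j-i+K}(v)\big)_{i,j=1}^N$, which the dual Jacobi--Trudi identity identifies as $s_{\langle N^K\rangle}$ (your index bookkeeping is right: the entries force $\lambda'=\langle K^N\rangle$, and the matrix size $N=\lambda_1$ meets the threshold for the identity), and the bialternant formula with the paper's sign convention for $\Delta$ gives the second displayed equality; your parenthetical dual-Cauchy-plus-character-orthogonality variant is essentially Bump--Gamburd's own proof. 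What each approach buys: yours is self-contained, purely algebraic, and requires no analytic limit arguments, whereas the paper's suggested route is longer for the CUE alone but treats CUE and ACUE uniformly, making transparent exactly where and why the two moment formulas diverge (namely in the $1\times 1$ ratio averages $e_N$ versus $\mathfrak{e}_N$ fed into the Borodin--Olshanski--Strahov determinant before condensation).
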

The determinantal ratio here is just a definition of the Schur polynomial $s_{\langle N^K \rangle}(v_1,...,v_{K+L})$ associated to the partition $\langle N^K \rangle = (N,...,N),$ with $K$ parts.

An example makes the pattern of the matrices in the numerators of the right hand sides of Theorems \ref{thm:main} and \ref{thm:BumpGamburd} easier to see. If $K= 5,\, L=4,\, N=2$, columns in the variable $v$ will be:
\begin{align*}
\textrm{For ACUE:}\; \begin{pmatrix} \phi_1(v) \\ \phi_2(v) \\ \phi_3(v) \\ \phi_4(v) \\ \phi_5(v) \\ -- \\ \phi_6(v)  \\ \phi_7(v) \\ \phi_8(v) \\ \phi_9(v) \end{pmatrix}
= \begin{pmatrix} v^{10} \\ v^9 - v^5 \\ v^8- v^4 \\ v^7 \\ v^6 \\ -- \\ v^3 \\ v^2 \\ v - v^5 \\ 1 - v^4 \end{pmatrix}, \hspace{30pt}
\textrm{For CUE:}\; \begin{pmatrix} \psi_1(v) \\ \psi_2(v) \\ \psi_3(v) \\ \psi_4(v) \\ \psi_5(v) \\ -- \\ \psi_6(v)  \\ \psi_7(v) \\ \psi_8(v) \\ \psi_9(v) \end{pmatrix}
= \begin{pmatrix} v^{10} \\ v^9 \\ v^8 \\ v^7 \\ v^6 \\ -- \\ v^3 \\ v^2 \\ v  \\ 1  \end{pmatrix},
\end{align*}
where the line serves only to visually separate the block with indices $i \leq K$ from the block with indices $i \geq K+1$.

Note that if $K, L \leq N$ then we have
\begin{align*}
&K-i \leq N-1 \quad &&\textrm{for}\; 1 \leq i \leq K \\
&i-K-1 \leq N-1 \quad &&\textrm{for}\; K+1 \leq i \leq K+L,
\end{align*} 
so it follows by examining the definition of $H_{N,\ell}$ that $\phi^{(K,L; N)} = \psi^{(K,L; N)}$ in the above determinantal formulas. Thus these formulas recover the observation of Tao in Theorem \ref{thm:Tao}. By contrast if $K > N$ or $L > N$ these formulas show the moments for these models differ, despite having a closely related structure.

In fact it is by specializing the following formula for averages of ratios of characteristic polynomials that we derive Theorem \ref{thm:main}. 

\begin{theorem}
	\label{thm:ratios}
	For $N$ and $J$ positive integers, and $v_1,...,v_J$ complex numbers and $u_1,...,u_J$ complex numbers which are not $2N$-th roots of unity with $v_i \neq u_j$ for all $i,j$,
	\begin{equation}
		\label{eq:ratios}
		\EEACUE \Bigg[\frac{\prod_{j=1}^J \det(1+v_j g)}{\prod_{j=1}^J \det(1+u_j g)} \Bigg]
		= \frac{1}{\det\Big(\frac{1}{u_i-v_j}\Big)} \det\left(\frac{1}{u_i-v_j} \mathfrak{e}_N(u_i,v_j)\right),
	\end{equation}
	where the determinants on the right hand side are of $J \times J$ matrices, over the indices $1 \leq i,j \leq J$, and
	$$
	\mathfrak{e}_N(u,v) := \frac{1-u^N v^N}{1-u^{2N}}.
	$$
\end{theorem}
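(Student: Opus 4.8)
The plan is to realise the ACUE as a discrete orthogonal polynomial ensemble and to apply the Borodin--Olshanski--Strahov determinantal formula for averages of ratios of characteristic polynomials. First, \eqref{eq:ACUEdef} displays the ACUE as a determinantal point process: the monomials $1,z,\dots,z^{N-1}$ are orthonormal for the uniform probability measure $\mu$ on the $2N$-th roots of unity, since $\int z^{j}\bar z^{k}\,d\mu=\ind[\,j\equiv k\Mod{2N}\,]$ and $0\le j,k\le N-1<2N$; hence the ACUE is the $N$-point determinantal process on the $2N$-th roots of unity with correlation kernel $K_{N}(z,w)=\sum_{j=0}^{N-1}z^{j}\bar w^{j}$, equivalently (using $\bar w=1/w$ on $S^{1}$) $K_{N}(z,w)=\bigl(1-(z/w)^{N}\bigr)/\bigl(1-z/w\bigr)$.

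Second, I would invoke the Borodin--Olshanski--Strahov formula: for a determinantal ensemble with a projection (orthogonal polynomial) correlation kernel, the average of a ratio of $J$ characteristic polynomials by $J$ is the ratio of two $J\times J$ determinants, the denominator being the Cauchy determinant $\det\bigl((w_{i}-z_{j})^{-1}\bigr)$ and the numerator $\det\bigl((w_{i}-z_{j})^{-1}\mathcal E_{N}(w_{i},z_{j})\bigr)$, where $\mathcal E_{N}(w,z)$ is the one-pair average. Since the ACUE is invariant under $g\mapsto -g$, we may write $X=-g$ and $\det(1+vg)=v^{N}\det\bigl((1/v)-X\bigr)$, so everything is reduced to computing
\[
\mathcal E_{N}(w,z):=\EEACUE\Bigl[\tfrac{\det(z-X)}{\det(w-X)}\Bigr]=\EEACUE\Bigl[\textstyle\prod_{k}\tfrac{z+e(\vartheta_{k})}{w+e(\vartheta_{k})}\Bigr].
\]
(For $J=1$ the asserted identity is a tautology, and it is consistent as $z_{a}\to w_{a}$ because $\mathcal E_{N}(w,w)=1$; its content is that in a determinantal ensemble the $J$-fold ratio is this Cauchy-dressed function of the one-pair datum $\mathcal E_{N}$.)

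The one-pair average is the computational heart. By the Andr\'eief (Heine) identity together with the Vandermonde--alternant factorisation $|\Delta|^{2}=\det(z_{i}^{\,j-1})\,\overline{\det(z_{i}^{\,j-1})}$, $\mathcal E_{N}(w,z)$ equals the $N\times N$ Toeplitz determinant $\det\bigl(\widehat\rho(i-j)\bigr)_{i,j=1}^{N}$ whose symbol $\rho(\zeta)=(z+\zeta)/(w+\zeta)$ is sampled against $\mu$, i.e. $\widehat\rho(n)=\tfrac1{2N}\sum_{\zeta^{2N}=1}\zeta^{-n}\rho(\zeta)$. I would evaluate it using the partial fraction $\rho(\zeta)=1+(z-w)/(w+\zeta)$: this exhibits the matrix as a triangular matrix with diagonal entry $z/w$ plus a rank-one matrix proportional to $\bigl((-w)^{-i}\bigr)_{i}\bigl((-w)^{j}\bigr)_{j}$, so the determinant equals $(z/w)^{N}$ times a rank-one (matrix-determinant-lemma) correction; that correction amounts to solving a triangular linear system which telescopes into a geometric series, and the outcome is $\mathcal E_{N}(w,z)=\frac{1-(wz)^{N}}{1-w^{2N}}=\mathfrak e_{N}(w,z)$, valid whenever $w$ is not a $2N$-th root of unity. (Equivalently: evaluate $\tfrac1{2N}\sum_{\zeta^{2N}=1}\tfrac{\zeta^{n}}{\beta-\zeta}=\tfrac{\beta^{[n-1]_{2N}}}{\beta^{2N}-1}$ by residues; or, for $|w|>1$, expand $\rho$ via the Cauchy identity, use the explicit value $\EEACUE[s_{\nu}(g)]=\det\bigl(\ind[\nu_{k}-k+l\equiv 0\Mod{2N}]\bigr)_{k,l=1}^{N}\in\{0,\pm1\}$, sum the series and continue analytically.)

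Finally, putting $z_{a}=1/v_{a}$, $w_{a}=1/u_{a}$ and using $\mathfrak e_{N}(1/u,1/v)=(u^{N}/v^{N})\,\mathfrak e_{N}(u,v)$ and $(1/u_{i}-1/v_{j})^{-1}=-u_{i}v_{j}/(u_{i}-v_{j})$, all the stray powers of $u_{a},v_{a}$ -- including the prefactor $\prod_{a}v_{a}^{N}/u_{a}^{N}$ coming from $\det(1+v_{a}g)=v_{a}^{N}\det((1/v_{a})-X)$ -- cancel, leaving exactly \eqref{eq:ratios}. The obstacles I expect are mostly bookkeeping: (i) stating/adapting the BOS formula in precisely the form above for this finite, circle-supported discrete ensemble and for the normalisation $\det(1+vg)$; and (ii) the $N\times N$ Toeplitz evaluation of $\mathcal E_{N}$, which although it collapses to a simple rational function does require the partial-fraction-plus-triangular-solve manoeuvre (equivalently the geometric-sum/residue identity above).
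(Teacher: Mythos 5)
Your proposal is correct and shares the paper's overall architecture: both reduce the $J$-fold ratio to the one-pair average via the Borodin--Olshanski--Strahov identity (Theorem \ref{thm:BOS}) and then evaluate that one-pair average in closed form. Where you genuinely diverge is in the evaluation itself. The paper (Proposition \ref{prop:1by1ratio}) expands $\det(1+vg)/\det(1+ug)$ as $\sum_{j,k}(-1)^k e_j h_k v^j u^k$, converts $e_jh_k$ into hook Schur functions by the Pieri rule, computes $\EEACUE\, s_{(a,1^b)}$ directly from the permutation expansion of the two Vandermonde determinants (Lemma \ref{lem:hookschur}), and sums the surviving geometric series; this requires $|u|<1$ and an analytic continuation at the end. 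You instead use the Andr\'eief--Heine identity to write the one-pair average as an $N\times N$ Toeplitz determinant whose symbol $(z+\zeta)/(w+\zeta)$ is sampled against the uniform measure on the $2N$-th roots of unity, and evaluate it by a triangular-plus-rank-one decomposition. I checked the two computational ingredients you lean on: the discrete residue identity $\tfrac{1}{2N}\sum_{\zeta^{2N}=1}\zeta^{n}/(\beta-\zeta)=\beta^{[n-1]_{2N}}/(\beta^{2N}-1)$ is a correct rational identity in $\beta$, and with the splitting that puts the lower-triangular branch of the aliased coefficients into the rank-one piece, the complementary triangular factor does have diagonal entries $z/w$, so the matrix-determinant-lemma correction telescopes to give $\mathcal{E}_N(w,z)=(1-(wz)^N)/(1-w^{2N})$; the change of variables $z_a=1/v_a$, $w_a=1/u_a$ then matches $\mathfrak{e}_N$ exactly. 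Your route buys uniformity (a single rational computation valid for every $w$ off the $2N$-th roots of unity, with no series expansion or continuation step) and would adapt to other discrete weights; the paper's symmetric-function route is what makes the Schur-function structure, and hence the passage to the moment formula of Theorem \ref{thm:main}, transparent. The one item you should not dismiss as bookkeeping is your point (i): the formula of \cite{BoOlSt} is stated for weights on $\mathbb{R}$ and as an identity of formal power series, so one must observe, as the paper does in proving Theorem \ref{thm:BOS}, that the argument carries over to complex-supported measures upon replacing the moments $A_n$ by $A_{n,m}=\int z^n\overline{z}^m\,\mu(dz)$, and that the formal identity becomes an identity of functions by first taking the $|u_i|$ large and then continuing meromorphically.
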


This formula in turn is a consequence of a general formula introduced by Borodin-Olshanski-Strahov in \cite{BoOlSt} for computing the average of ratios of characteristic polynomials associated to what they call Giambelli-compatible point processes. We will show the ACUE falls into this class of point processes and then specialize their result; see Theorem \ref{thm:BOS} below.

Theorem \ref{thm:ratios} may be compared to an analogous formula for the CUE (see e.g. \cite[Thm. 4.2]{Ro}, \cite[Thm. 5.4]{ChNaNi}, or \cite[(4.35)]{KiGu}):

\begin{theorem}
	\label{thm:ratios_CUE}
	For $N$ and $J$ positive integers, and $v_1,...,v_J$ complex numbers and $u_1,...,u_J$ complex numbers which do not lie on the unit circle with $v_i \neq u_j$ for all $i,j$,
	\begin{equation*}
		\EECUE \Bigg[\frac{\prod_{j=1}^J \det(1+v_j G)}{\prod_{j=1}^J \det(1+u_j G)} \Bigg]
		= \frac{1}{\det\Big(\frac{1}{u_i-v_j}\Big)} \det\left(\frac{1}{u_i-v_j} e_N(u_i,v_j)\right),
	\end{equation*}
	where the determinants on the right hand side are of $J \times J$ matrices, over the indices $1 \leq i,j \leq J$, and
	$$
	e_N(u,v) := \begin{cases} 1 & \textrm{if}\; |u|< 1, \\ v^N/u^N & \textrm{if}\; |u| > 1.\end{cases}
	$$
\end{theorem}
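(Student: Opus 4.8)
The plan is to prove Theorem~\ref{thm:ratios_CUE} in parallel with Theorem~\ref{thm:ratios}, namely as a specialization of the Borodin--Olshanski--Strahov (BOS) formula for ratios of characteristic polynomials over Giambelli-compatible point processes \cite{BoOlSt}. (The statement is classical and one may instead simply cite \cite{Ro,ChNaNi,KiGu}; I describe the BOS route because it mirrors the proof of the ACUE statement.) The $\CUE(N)$ is the prototypical Giambelli-compatible ensemble: it is the determinantal point process on the unit circle with the reproducing kernel $K_N(z,w)=\sum_{k=0}^{N-1}(z\bar w)^k$ of $\mathrm{span}(1,\dots,z^{N-1})$ in $L^2(d\theta/2\pi)$. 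Granting that the BOS ratio formula applies (this is the CUE analogue of Theorem~\ref{thm:BOS}), it reduces the $J$-fold average to the $J\times J$ determinant whose $(i,j)$ entry is $\tfrac1{u_i-v_j}$ times the single-ratio average $e_N(u_i,v_j):=\EECUE[\det(1+v_jG)/\det(1+u_iG)]$, divided by the Cauchy determinant $\det(\tfrac1{u_i-v_j})$. Everything then reduces to computing this single-ratio average and checking it matches the case definition of $e_N$ in the statement.

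The single-ratio computation uses the character orthogonality relations $\EECUE[s_\lambda(G)\,\overline{s_\mu(G)}]=\delta_{\lambda\mu}$ and $\EECUE[s_\lambda(G)\,s_\mu(G)]=\delta_{\lambda,\emptyset}\,\delta_{\mu,\emptyset}$, valid for partitions $\lambda,\mu$ (with $s_\lambda(G)=0$ whenever $\lambda$ has more than $N$ parts). If $|u|<1$, expand $1/\det(1+uG)=\sum_{m\ge0}(-u)^m h_m(G)=\sum_{m\ge0}(-u)^m s_{(m)}(G)$ (convergent since $|uz_i|<1$) and $\det(1+vG)=\sum_{k\ge0}v^k s_{(1^k)}(G)$; the second orthogonality relation leaves only the $k=m=0$ term, so $e_N(u,v)=1$. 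If $|u|>1$, use the functional equation $\det(1+uG)=u^N\det(G)\prod_i(1+u^{-1}\bar z_i)$ to write $1/\det(1+uG)=u^{-N}\,\overline{\det(G)}\sum_{m\ge0}(-u^{-1})^m s_{(m)}(\bar G)$ (now convergent); after multiplying by $\det(1+vG)=\sum_k v^k s_{(1^k)}(G)$ and simplifying $\overline{\det(G)}\,s_{(m)}(\bar G)=\overline{s_{(1^N)}(G)s_{(m)}(G)}$ (which by Pieri equals $\overline{s_{(m+1,1^{N-1})}(G)}$ for $m\ge1$ and $\overline{s_{(1^N)}(G)}$ for $m=0$), the first orthogonality relation leaves only the $k=N$, $m=0$ term, contributing $u^{-N}v^N$; so $e_N(u,v)=v^N/u^N$. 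This is precisely the $e_N$ of the statement.

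Finally I would note where the hypotheses enter: $u_i\notin S^1$ guarantees $\det(1+u_iG)\neq 0$ for every eigenvalue configuration (it vanishes only if $-1/u_i$ is an eigenvalue, which occurs with positive probability only when $|u_i|=1$) and ensures one of the two expansions above converges; the condition $v_i\neq u_j$ keeps the Cauchy determinant nonzero generically and the entries $\tfrac1{u_i-v_j}e_N(u_i,v_j)$ finite; and the degenerate cases (coincident $u$'s or $v$'s) follow by continuity, since both sides are rational in all $2J$ parameters on each region.

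The main difficulty here is bookkeeping rather than depth: one must match the precise normalization of the Cauchy determinant in the BOS formula, and one must carry the $|u|>1$ computation carefully, tracking the $\overline{\det(G)}=\det(G)^{-1}$ factor and the conjugated Schur functions created by the functional equation. (A self-contained alternative avoids \cite{BoOlSt} entirely: by the Heine--Szeg\H{o} identity $\EECUE[\prod_i f(z_i)]$ is the $N\times N$ Toeplitz determinant with symbol $f$, so with $f(z)=\prod_j\frac{1+v_jz}{1+u_jz}$ one invokes Day's formula for Toeplitz determinants with rational symbols; the partition of the $u_j$ according to whether $|u_j|<1$ or $|u_j|>1$ is exactly what produces the two cases in $e_N$, after which matching Day's formula to the claimed $J\times J$ ratio is a direct, if tedious, calculation.)
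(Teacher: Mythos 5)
The paper does not actually prove Theorem \ref{thm:ratios_CUE}; it cites \cite{Ro}, \cite{ChNaNi}, \cite{KiGu}. Your proposal instead transplants the paper's proof of the ACUE analogue (Theorem \ref{thm:ratios}): apply the Borodin--Olshanski--Strahov determinantal identity and then evaluate the single-ratio average. Your evaluation of $\EECUE[\det(1+vG)/\det(1+uG)]$ is correct and is a faithful CUE analogue of Proposition \ref{prop:1by1ratio}: for $|u|<1$ the expansion in $h_m$ and $e_k$ together with $\EECUE[s_\lambda s_\mu]=\delta_{\lambda\emptyset}\delta_{\mu\emptyset}$ gives $1$, and for $|u|>1$ the functional equation plus the Pieri computation $\det(G)h_m(G)=s_{(m+1,1^{N-1})}(G)$ and Schur orthogonality gives $v^N/u^N$.

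The one genuine gap is the step you explicitly wave through: ``granting that the BOS ratio formula applies.'' For the ACUE this extension from formal power series to actual expectations is handled in the paper by meromorphic continuation in the $u_j$, which works because the ACUE is a finite point configuration and both sides are rational functions of $u_j$ with poles only at $2N$-th roots of unity. For the CUE this argument fails: $\EECUE[1/\det(1+uG)]$ equals $1$ for $|u|<1$ and $0$ for $|u|>1$, so the expectation has a genuine jump across the unit circle and is not the analytic continuation of its restriction to $|u|<1$. The BOS power-series identity therefore only directly yields the theorem when all $|u_j|<1$; the mixed case (some $u_j$ inside, some outside), which is exactly where the case structure of $e_N$ matters, requires a separate argument --- e.g.\ applying the functional equation to each outside $u_j$ before invoking BOS and then tracking the resulting $\overline{G}$ factors, or handling each arrangement of the $u_j$ relative to the circle directly as in the cited references. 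Your fallback via the Heine--Szeg\H{o} identity and Day's formula for rational symbols is a legitimate self-contained route that avoids this issue entirely, and if you want a complete proof rather than a citation, that (or a careful case-by-case BOS argument) is where the real work lies. One minor point: the hypothesis $|u_i|\neq 1$ is needed not because $-1/u_i$ is an eigenvalue with positive probability on the circle (it is not, for CUE), but to ensure the denominator is bounded away from zero uniformly over configurations and that one of your two series expansions converges.
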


From Theorem \ref{thm:ratios}, a possible strategy for proving Theorem \ref{thm:main} is evident: we take appropriately scaled limits, with each $u_i$ tending either to $0$ or $\infty$ in order to recover the average appearing in Theorem \ref{thm:main}. Doing so nonetheless involves several nontrivial determinantal manipulations.

There is at least one alternative strategy for proving Theorems \ref{thm:main} and \ref{thm:ratios}, and this is to rely on the theory of orthogonal polynomials. This method has been used to derive similar formulas for moments and averages of ratios of characteristic polynomials in several random matrix ensembles; see for instance \cite{BrHi,BaDeSt,JoKeMe} for moments and \cite{StFy, BoSt} for ratios. One difficulty in the orthogonal polynomial method is that the finitely supported weights which define the ACUE allow for at most a finite collection of monic orthogonal polynomials. It would be interesting to see if this difficulty can be overcome to give alterative proofs of Theorems \ref{thm:main} or \ref{thm:ratios}.

It is perhaps a little surprising that moments of characteristic polynomials from the ACUE have a structure related to those from the CUE even for very large powers. This may ultimately be seen as a consequence of the similarity between Theorems \ref{thm:ratios} and \ref{thm:ratios_CUE} for ratios; another purpose of this paper is to provide an explanation of how ratio formulas like Theorem \ref{thm:ratios} can be used to derive moment formulas like Theorem \ref{thm:main}. It will be evident that the same method could be used to deduce Theorem \ref{thm:BumpGamburd} from Theorem \ref{thm:ratios_CUE} as well.

We note that formulas for the averages of ratios of characteristic polynomials in the CUE usually are written in a form involving a sum over `swaps', involving a slightly different formalism than Theorem \ref{thm:ratios_CUE},  -- see for instance \cite[Prop 2.1]{CoFoSn}, \cite[Cor. 1.2]{CoFaZi}, or \cite[Thm. 3]{BuGa}. By use of the functional equation, these formulas can be deduced from Theorem \ref{thm:ratios_CUE}. For instance, the $J=2$ case of Theorem \ref{thm:ratios_CUE} entails the following: for complex numbers $\alpha, \beta, \gamma, \delta$ with $|\gamma|, |\delta| < 1$,
\begin{multline*}
\EECUE \Bigg[ \frac{\det(1-\alpha \,G) \det(1-\beta\, \overline{G})}{\det(1-\gamma\, G) \det(1-\delta\, \overline{G})} \Bigg] =  \frac{\beta^N}{\delta^N} \EECUE \Bigg[ \frac{\det(1-\alpha \,G) \det(1-\beta^{-1}\, G)}{\det(1-\gamma\, G) \det(1-\delta^{-1}\, G)} \Bigg]\\
= \frac{(1-\beta\gamma) (1-\alpha \delta)}{(1-\delta \gamma) (1-\alpha \beta)} + (\alpha\beta)^N \frac{(1-\gamma \alpha^{-1}) (1-\delta \beta^{-1})}{(1-\alpha^{-1} \beta^{-1})(1-\gamma \delta)}.
\end{multline*}
Note that this formula is valid only for $|\gamma|, |\delta| < 1.$ If instead for instance $|\gamma| < 1$ and $|\delta| > 1$, the left hand side would just work out to $1$.

By using the functional equation \eqref{eq:func_eq} for $\det(1+v g)$ one can derive expressions of this sort for the ACUE as well. For instance, for complex numbers $\alpha, \beta, \gamma, \delta$ with neither $\gamma$ nor $\delta$ equal to $2N$-th roots of unity, Theorem \ref{thm:ratios} reveals,
\begin{multline*}
\EEACUE \Bigg[ \frac{\det(1-\alpha \,g) \det(1-\beta\, \overline{g})}{\det(1-\gamma\, g) \det(1-\delta\, \overline{g})} \Bigg] = \frac{\beta^N}{\delta^N} \EEACUE \Bigg[ \frac{\det(1-\alpha \,g) \det(1-\beta^{-1}\, g)}{\det(1-\gamma\, g) \det(1-\delta^{-1}\, g)} \Bigg] \\
= \frac{(1-\beta\gamma) (1-\alpha \delta)}{(1-\delta \gamma) (1-\alpha \beta)} \Big( \frac{1- \alpha^N \gamma^N}{1-\gamma^{2N}}\Big)\Big(\frac{1-\beta^N \delta^N}{1-\delta^{2N}}\Big)\\
 + (\alpha\beta)^N \frac{(1-\gamma \alpha^{-1}) (1-\delta \beta^{-1})}{(1-\alpha^{-1} \beta^{-1})(1-\gamma \delta)} \Big( \frac{1- \beta^{-N} \gamma^N}{1-\gamma^{2N}}\Big)\Big(\frac{1-\alpha^{-N} \delta^N}{1-\delta^{2N}}\Big).
\end{multline*}
Note that in this case there is no need to assume that $|\gamma|, |\delta| < 1$. Indeed, the right and left hand sides are meromorphic in the variables $\gamma$ and $\delta$, with singularities only at $2N$-th roots of unity.

This procedure can be used to obtain formulas for $J > 2$ as well. But for mixed ratios of more than two characteristic polynomials, expansions like this for the ACUE seem to become increasingly more complicated than those for the CUE; by contrast the determinantal formula of Theorem \ref{thm:ratios} remains relatively simple for all $J$.

It is natural to ask whether Theorems \ref{thm:main} or \ref{thm:BOS} shed light on any number theoretic phenomena. A typical question in number theory involves moments of the Riemann zeta-function in which powers $K$ and $L$ are fixed or grow slowly. Theorem \ref{thm:Tao} of Tao is certainly of interest in this regard, but because $K$ and $L$ must be of size at least $N$ before Theorem \ref{thm:main} sees a difference between the CUE and ACUE prediction, it does not seem that the new information in this theorem will shed light on these sorts of questions. On the other hand, uniform estimates for moments can be of some interest in determining extreme values of L-functions (see e.g. \cite[Sec. 7]{Sou}), and Theorem \ref{thm:main} may be of some use in examining alternative possibilities here. Furthermore Theorem \ref{thm:ratios} suggests a hypothetical `alternative ratio formula' for the Riemann zeta-function -- a formula which one would like to rule out but cannot at present. This is discussed further in Section \ref{sec:zetafunction}.

\vspace{10pt}

\noindent \textbf{Acknowledgements:} We thank David Farmer and Ofir Gorodetsky for very useful references, comments, and corrections. We are also grateful to the anonymous referee for a careful reading and useful comments and corrections. B.R. received partial support from an NSERC grant and US NSF FRG grant 1854398.

\section{The ratio formula: Theorem \ref{thm:ratios}}

In this section we prove Theorem \ref{thm:ratios}. Our starting point is an application of a general formula of Borodin-Olshanski-Strahov to the ACUE.

\begin{theorem}[A Borodin-Olshanski-Strahov Formula for ACUE]
	\label{thm:BOS}
	For $N$ and $J$ positive integers, $v_1,...,v_J$ complex numbers, and $u_1,...,u_J$ complex numbers which are not $2N$-th roots of unity with $v_i \neq u_j$ for all $i,j$,
	\begin{multline}
		\label{eq:BOS}
		\EEACUE \Bigg[\frac{\prod_{j=1}^J \det(1+v_j g)}{\prod_{j=1}^J \det(1+u_j g)} \Bigg] \\
		= \frac{1}{\det\Big(\frac{1}{u_i-v_j}\Big)} \det\left(\frac{1}{u_i-v_j} \EEACUE \Big[\frac{\det(1+v_j g)}{\det(1+u_i g)}\Big]\right),
	\end{multline}
	where the determinants on the right hand side are of $J \times J$ matrices, over the indices $1 \leq i,j \leq J$.
\end{theorem}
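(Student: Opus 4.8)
The statement to prove is Theorem \ref{thm:BOS}, a Borodin--Olshanski--Strahov-type ratio formula for the ACUE. The natural route is to verify that the ACUE, viewed as a point process on the $2N$-th roots of unity, is a \emph{Giambelli-compatible point process} in the sense of \cite{BoOlSt}, and then quote their general theorem. So the first step is to recall the setup of Borodin--Olshanski--Strahov: one needs the process to be determinantal, or more precisely one needs the averages of products of characteristic polynomials (and their reciprocals) to obey the Giambelli formula — the expansion of a Schur function indexed by a partition in Frobenius coordinates as a determinant of Schur functions indexed by hook partitions. The cleanest sufficient condition from \cite{BoOlSt} is that the ``one-point'' averages $\EEACUE[\det(1+vg)]$, $\EEACUE[1/\det(1+ug)]$, and the mixed ratio $\EEACUE[\det(1+vg)/\det(1+ug)]$ determine all higher mixed averages via a determinantal (Giambelli/Wick-type) identity; this is exactly what yields \eqref{eq:BOS}.

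**Key steps in order.** (1) Show the ACUE is a determinantal point process on the finite set $\{0,\tfrac{1}{2N},\dots,\tfrac{2N-1}{2N}\}$ with an explicit correlation kernel: since the weight is $\tfrac{1}{(2N)^N}|\Delta(e(t_1),\dots,e(t_N))|^2$, this is a discrete orthogonal polynomial ensemble with respect to counting measure on the $2N$-th roots of unity, and the relevant orthogonal ``polynomials'' are just the monomials $1, x, \dots, x^{N-1}$ (orthogonal because $\sum_{t} e((j-k)t) = 2N \cdot \ind_{j \equiv k \,(2N)}$), giving kernel $\mathcal{K}_N(x,y) = \tfrac{1}{2N}\sum_{j=0}^{N-1}(x\bar y)^j$ on the roots of unity. (2) Verify that this determinantal structure makes the process Giambelli-compatible; here I would either cite the general result of \cite{BoOlSt} that \emph{every} determinantal process built from a projection-type kernel of this form is Giambelli-compatible, or check directly that averages of products of characteristic polynomials are Schur functions against the spectral measure (so Giambelli for Schur functions transfers verbatim). (3) Invoke the Borodin--Olshanski--Strahov ratio theorem, which for a Giambelli-compatible process states precisely that the $J \times J$ mixed ratio average equals $\det\big(\tfrac{1}{u_i-v_j}\EE[\det(1+v_jg)/\det(1+u_ig)]\big) / \det\big(\tfrac{1}{u_i-v_j}\big)$, the condition $u_i$ not a $2N$-th root of unity being exactly what is needed for $1/\det(1+u_ig)$ to have finite expectation (the eigenvalues of $g$ avoid $-1/u_i$). (4) Conclude \eqref{eq:BOS}.

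**Main obstacle.** The crux is step (2): one must confirm that the ACUE genuinely satisfies the hypotheses under which the Borodin--Olshanski--Strahov formula holds, rather than merely being determinantal. The subtlety is that \cite{BoOlSt} phrase Giambelli-compatibility in terms of a ``regular'' structure (an algebra of functions with a distinguished basis on which a Giambelli-type identity holds for the relevant averages), and a discrete, finitely-supported ensemble like the ACUE needs care: there are only $N$ nonzero orthogonal polynomials, so expectations of $\det(g)^{-K}\prod\det(1+v_kg)$ or of high-degree products may vanish or behave degenerately, and one must check the ratio formula is not affected by such degeneracies (equivalently, that the meromorphic identity in the $u_i$ extends across these points, as the remarks after Theorem \ref{thm:ratios} already hint). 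I expect the resolution is that the ratio formula \eqref{eq:BOS} is an identity of rational functions in $u_1,\dots,u_J$ whose only poles are at $2N$-th roots of unity, so it suffices to establish it for $u_i$ in a region where all quantities are manifestly finite (e.g. $|u_i|$ small), where the Giambelli-compatibility argument goes through cleanly, and then extend by meromorphy; alternatively, one verifies the determinantal Giambelli identity for the ACUE averages directly from the kernel $\mathcal{K}_N$, which is a finite and essentially bookkeeping computation. The explicit evaluation $\mathfrak{e}_N(u,v) = (1-u^Nv^N)/(1-u^{2N})$ claimed in Theorem \ref{thm:ratios} is then a separate one-variable computation (the $J=1$ case, done by summing a geometric series against the kernel) and is \emph{not} part of proving Theorem \ref{thm:BOS} itself.
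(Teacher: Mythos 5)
Your proposal takes essentially the same route as the paper: reduce to the Giambelli-compatibility framework of Borodin--Olshanski--Strahov for squared-Vandermonde ensembles, invoke their ratio formula, and handle the discreteness of the support by establishing the identity first in a region of convergence and then extending by meromorphic continuation in the $u_i$. The one detail the paper makes explicit that you gloss over is that \cite{BoOlSt} state their claims only for weights supported on $\mathbb{R}$, so one must observe that their proof carries over to measures on $\mathbb{C}$ (replacing moments $A_n$ by bimoments $A_{n,m}$) --- though your fallback of verifying the Giambelli identity directly against the spectral measure would also dispose of this point.
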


\begin{proof}
This requires only minor modifications of formulas in \cite{BoOlSt}. Claims I and II of that paper show that if $\mu$ is a measure on $\mathbb{C}$ with finite moments and if a point process consisting of $N$ points $\{z_1,...,z_N\}$ in $\mathbb{C}$ has a joint density given by
$$
(\mathrm{const.}) \prod_{1\leq i < j \leq N} |z_i-z_j|^2 \prod_{i=1}^N \mu(dz_i),
$$
then as a formal powers series
\begin{multline*}
\mathbb{E} \big[H(\alpha_1)\cdots H(\alpha_J) E(\beta_1)\cdots E(\beta_J) \big]\\
= \frac{1}{\det\Big(\frac{1}{\alpha_i+\beta_j}\Big)} \det\left(\frac{1}{\alpha_i+\beta_j} \mathbb{E} \big[H(\alpha_i) E(\beta_j)\big]\right),
\end{multline*}
where
$$
H(\alpha) := \frac{1}{\prod_{j=1}^N (1-z_j \alpha^{-1})}, \quad E(\beta):= \prod_{j=1}^N(1+z_j \beta^{-1}).
$$

This is only claimed for a measure $\mu$ supported on $\mathbb{R}$, but the proof applies with no change to measures supported on $\mathbb{C}$, except that in the proof of Theorem 3.1 the moments $A_n = \int_{\mathbb{R}} x^n \mu(dx)$ must be replaced by $A_{n,m} = \int_\mathbb{C} z^n \overline{z}^m\, \mu(dz)$ and later in the proof $A_{\lambda_i + N-i + N-j}$ must be replaced by $A_{\lambda_i + N - i\,,\, N-j}$.

The point process ACUE is induced by such a joint density where $\mu$ is a probability measure uniform on the $2N$-th roots of unity in $\mathbb{C}$. This identity may be seen to be true not just for formal powers series but for functions $H(\alpha)$, $E(\beta)$ by considering the case $|\alpha_1|,...,|\alpha_J| > 1$ (where all power series will converge absolutely) and then meromorphically continuing to all $\alpha_1,...,\alpha_J$.

Finally, we arrive at \eqref{eq:BOS} simply by setting $\alpha_j = - u_j^{-1}$, $\beta_j = v_j^{-1}$ and simplifying the resulting determinants.
\end{proof}

The remainder of this section is therefore devoted to understanding the expectation which occurs on the right hand side of \eqref{eq:BOS}, accomplished in Proposition \ref{prop:1by1ratio} below.

\begin{lemma}
\label{lem:hookschur}
Consider a hook partition $(a,1^b)$ with $a\geq 1$ and $b \geq 0$ of length $b+1 \leq N$. For the Schur polynomial $s_{(a,1^b)}$ associated to this partition in the variables $e(\vartheta_1),...,e(\vartheta_N)$ of the $\ACUE(N)$, we have
$$
\EEACUE\, s_{(a,1^b)}  = \begin{cases} (-1)^b & \textrm{if}\quad a+b \equiv 0 \Mod {2N} \\ 0 & \textrm{otherwise}. \end{cases}
$$
\end{lemma}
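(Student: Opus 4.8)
The plan is to compute $\EEACUE\, s_{(a,1^b)}$ by expanding the Schur polynomial over the support of the $\ACUE(N)$ and exploiting the fact that the underlying weight is uniform on the $2N$-th roots of unity. First I would recall that the $\ACUE(N)$ is a determinantal point process on the $2N$-th roots of unity with correlation kernel built from the monomials $1, z, \dots, z^{N-1}$ orthonormalized against the uniform measure $\mu$ on $\{e(j/2N): 0 \le j \le 2N-1\}$; since $\frac{1}{2N}\sum_{j=0}^{2N-1} e(j/2N)^n \overline{e(j/2N)^m} = \ind[n \equiv m \Mod{2N}]$, the monomials $z^0,\dots,z^{N-1}$ are already orthonormal, so the kernel is $K_N(z,w) = \sum_{r=0}^{N-1} z^r \overline{w}^r$. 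The key structural input is then the dual Cauchy/Jacobi–Trudi style identity expressing a Schur polynomial average over a determinantal process as a determinant of one-variable averages, or more directly the classical fact that for a determinantal process $\EEACUE\, s_\lambda$ equals the determinant $\det\big( \EEACUE[\text{single-particle moment}] \big)$ indexed by the parts of $\lambda$. For a hook $\lambda = (a,1^b)$ of length $b+1 \le N$, the relevant determinant has a nearly diagonal structure.

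The cleanest route: use the Giambelli (hook-content) expansion together with the fact that expectations of power-sum-type or monomial symmetric functions against $\mu^{\otimes N}$ localize modulo $2N$. Concretely, write $s_{(a,1^b)}$ via the Jacobi–Trudi identity as a polynomial in complete homogeneous symmetric functions $h_k$, or use the known expansion $s_{(a,1^b)} = \sum_{j} (-1)^{?} h_{?} e_{?}$ for hooks; then apply $\EEACUE$. The crucial computation is $\EEACUE\, [\text{product of elementary/complete symmetric functions}]$, which I would reduce to the earlier-cited facts about the $\ACUE(N)$: in particular, because the joint density \eqref{eq:ACUEdef} involves $|\Delta|^2$ against the uniform measure on $2N$-th roots of unity, any symmetric function average picks out only those monomials whose exponent vector is congruent, as a multiset, to $(0,1,\dots,N-1) \Mod{2N}$ after accounting for the Vandermonde. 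This is exactly the mechanism behind Tao's Theorem \ref{thm:Tao}, and I expect the hook case to produce the stated dichotomy: a surviving term precisely when the hook content $a+b$ (equivalently the single box displacement of the hook relative to the staircase) wraps around to a multiple of $2N$, contributing the sign $(-1)^b$ coming from the antisymmetry of $\Delta$ (a cyclic shift by $2N$ in one coordinate against a staircase of length $N$ costs a permutation sign, and for a hook this sign is exactly $(-1)^b$).

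The step I expect to be the main obstacle is bookkeeping the sign and the ``wrap-around'' condition precisely: one must check that the only way a hook Schur function contributes is through a single monomial whose exponents are a cyclic/modular rearrangement of the staircase $\{0,1,\dots,N-1\}$, rule out all other monomials, and then verify that the resulting permutation sign is $(-1)^b$ rather than some other power. Concretely I would set $\lambda_1 = a$, $\lambda_2 = \dots = \lambda_{b+1} = 1$, $\lambda_{b+2} = \dots = \lambda_N = 0$, form the shifted vector $\ell_i = \lambda_i + N - i$, and ask when $(\ell_1,\dots,\ell_N)$ reduces mod $2N$ to a permutation of $(0,1,\dots,N-1)$; since $\ell_1 = a + N - 1$ and the remaining $\ell_i$ are already $\{0,1,\dots,N-2\}\setminus\{N-1-b\} \cup \{N-1-b\}$-type values, the only obstruction is at the first coordinate, giving $a + N - 1 \equiv N - 1 - b \Mod{2N}$, i.e. $a + b \equiv 0 \Mod{2N}$; the transposition needed to sort $\ell_1$ into its correct slot moves it past exactly $b$ entries, yielding the sign $(-1)^b$, while the magnitude is $1$ because the orthonormal weights all equal $1$. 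Assembling these observations with the determinantal formula for $\EEACUE\, s_\lambda$ completes the proof.
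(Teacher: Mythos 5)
Your final paragraph is essentially the paper's own proof: the paper writes $s_\lambda$ as the bialternant $\det(\omega_i^{\lambda_j+N-j})/\det(\omega_i^{N-j})$, cancels one Vandermonde against the $|\Delta|^2$ density, expands both determinants over permutations, and uses orthogonality of characters mod $2N$ to get $\EEACUE\, s_\lambda = \sum_\pi (-1)^\pi \ind\big[\lambda_i - i + \pi(i) \equiv 0 \Mod{2N}\ \text{for all}\ i\big]$, after which the hook case is settled by exactly your shifted-exponent bookkeeping (unique surviving permutation, a $(b+1)$-cycle of sign $(-1)^b$, nonzero precisely when $a+b \equiv 0 \Mod{2N}$). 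The only difference is that the paper justifies the localization principle explicitly via this double permutation expansion, whereas you assert it (and the detours in your first two paragraphs through the determinantal kernel and Jacobi--Trudi/Pieri are unnecessary and unexecuted); your congruence and sign computations are correct.
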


\begin{proof}
Label $\omega_j = e(\vartheta_j)$ so that for a partition $\lambda$ of length $\ell(\lambda) \leq N$,
$$
s_\lambda = \frac{\det(\omega_i^{\lambda_j + N -j})}{\det(\omega_i^{N-j})},
$$
where if $\ell(\lambda) < N$ we adopt the convention $\lambda_{\ell(\lambda)+1} = \cdots = \lambda_N = 0$, and the determinants above are $N \times N$.

Note that $\det(\omega_i^{N-j}) = \Delta(\omega_1,...,\omega_N)$. Hence from the definition \eqref{eq:ACUEdef} of the ACUE,
\begin{multline}
\label{eq:detsum}
\EEACUE\, s_\lambda \\ = \frac{1}{N! \, (2N)^N} \sum_{t_1,...,t_N} \det\big(e\big((\lambda_j + N - j) t_i\big)\big)  \overline{\det\big(e\big((N - j) t_i\big)\big)},
\end{multline}
where each index $t_i$ is summed over the set $\{0, \tfrac{1}{2N},...,\tfrac{2N-1}{2N}\}$. Expanding each determinant into a sum over permutations mapping $\{1,...,N\}$ to $\{1,...,N\}$ one sees
\begin{multline*}
\det\big(e\big(\lambda_j + N - j) t_i\big)\big)  \overline{\det\big(e\big(N - j) t_i\big)\big)} \\
= \sum_{\sigma, \pi \in S_N} (-1)^\sigma (-1)^\pi \prod_{i=1}^N e\big((\lambda_{\sigma(i)} + N - \sigma(i)) t_i - (N - \pi(i)) t_i\big).
\end{multline*}
Thus \eqref{eq:detsum} is
\begin{align}
\label{eq:permutationsum}
\notag & = \frac{1}{N!} \sum_{\sigma,\pi \in S_N} (-1)^\sigma (-1)^\pi \mathbf{1}\big[(\lambda_{\sigma(i)} - \sigma(i)) + \pi(i) \equiv 0 \Mod{2N}\quad \textrm{for all}\; i\big] \\
&= \sum_{\pi \in S_N} (-1)^\pi \mathbf{1}\big[\lambda_{i} - i + \pi(i) \equiv 0 \Mod{2N}\quad \textrm{for all}\; i\big],
\end{align}
where $\mathbf{1}[\,\cdot\,]$ denotes an indicator function, taking the value $1$ or $0$ depending on whether the proposition inside is true or false.

In the special case that $\lambda = (a,1^b)$ this sum has a simple evaluation. In that case any nonvanishing summand will have $\pi$ satisfying
\begin{align*}
a-1 + \pi(1) &\equiv 0 \Mod {2N} \\
&\textrm{and}\\
1 - 2 + \pi(2) &\equiv 0 \Mod {2N} \\
&\vdots \\
1 - (b+1) + \pi(b+1) &\equiv 0 \Mod {2N}\\
&\textrm{and}\\
0 - (b+2) + \pi(b+2) &\equiv 0 \Mod {2N} \\
&\vdots \\
0 - N + \pi(N) &\equiv 0 \Mod {2N}.
\end{align*}
Since $1 \leq \pi(i) \leq N$, the last $N-1$ of these equations force
$$
\pi(b+2) = b+2, \; ...., \; \pi(N) = N,
$$
$$
\pi(2) = 1, \; ...., \; \pi(b+1) = b.
$$
This forces 
$$
\pi(1) = b+1,
$$
and so at most one permutation $\pi$ makes a nonzero contribution to \eqref{eq:permutationsum}, and that contribution is nonzero if and only if $a+b \equiv 0 \Mod{2N}$, since $a + b = a - 1 + \pi(1)$. Since in cycle notation this permutation is $\pi = (b+1,\, b, \, ..., \, 2, \, 1)$ we have $(-1)^\pi = (-1)^b$ and this verifies the lemma.
\end{proof}

\begin{proposition}
\label{prop:1by1ratio}
For $v$ any complex number and $u$ any complex number which is not a $2N$-th root of unity,
$$
\EEACUE \frac{\det(1+vg)}{\det(1+ug)} = \frac{1- u^N v^N}{1-u^{2N}}.
$$
\end{proposition}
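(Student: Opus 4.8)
The plan is to expand the ratio as a double series in symmetric functions of the points $\omega_j := e(\vartheta_j)$, apply linearity of $\EEACUE$, and evaluate the resulting terms using Lemma \ref{lem:hookschur}. First I would write $\det(1+vg) = \sum_{k=0}^N v^k\, e_k(\omega_1,\dots,\omega_N)$, a finite sum of elementary symmetric polynomials, and, for $|u| < 1$ (so that the geometric series below converge absolutely, since $|\omega_j| = 1$),
\[
\frac{1}{\det(1+ug)} = \prod_{j=1}^N \frac{1}{1+u\omega_j} = \sum_{m \ge 0} (-u)^m\, h_m(\omega_1,\dots,\omega_N),
\]
where $h_m$ denotes the complete homogeneous symmetric polynomial. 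Multiplying these and taking expectations --- the interchange being harmless because $\EEACUE$ is a finite average over configurations on which the series converges absolutely --- yields
\[
\EEACUE \frac{\det(1+vg)}{\det(1+ug)} = \sum_{k=0}^N \sum_{m \ge 0} v^k (-u)^m\, \EEACUE\big[e_k(\omega)\, h_m(\omega)\big].
\]

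The key step is to reduce each $\EEACUE[e_k h_m]$ to hook Schur functions and invoke Lemma \ref{lem:hookschur}. Since $e_k = s_{(1^k)}$ and $h_m = s_{(m)}$, the Pieri rule gives the two-term expansion $e_k h_m = s_{(m,1^k)} + s_{(m+1,1^{k-1})}$ for $k \ge 1$ and $m \ge 1$ (with $e_k h_0 = s_{(1^k)}$ and $e_0 h_m = s_{(m)}$ in the boundary cases). Both hooks $(m,1^k)$ and $(m+1,1^{k-1})$ have the same value of $a+b$, namely $m+k$, so by Lemma \ref{lem:hookschur} their ACUE expectations equal $(-1)^{k}$ and $(-1)^{k-1}$ respectively, each times $\ind[m+k \equiv 0 \Mod{2N}]$ --- hence they cancel --- provided both hooks have length $\le N$. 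This holds exactly when $1 \le k \le N-1$; for such $k$ one also checks that the $m=0$ term $s_{(1^k)}$ contributes $0$, since its value of $a+b$ is $k$, which is never $\equiv 0 \Mod{2N}$. So only $k = 0$ and $k = N$ survive.

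It remains to sum the two surviving geometric series. For $k = 0$ one has $\EEACUE h_m = \ind[m \equiv 0 \Mod{2N}]$, contributing $\sum_{j \ge 0} u^{2Nj} = 1/(1-u^{2N})$. For $k = N$ the hook $s_{(m,1^N)}$ vanishes identically (it has $N+1$ parts), so the cancellation does not occur; the remaining hook gives $\EEACUE[e_N h_m] = (-1)^{N-1}\ind[m+N \equiv 0 \Mod{2N}]$, which is nonzero only for $m = (2j+1)N$ with $j \ge 0$, and after simplifying $(-u)^{(2j+1)N} = (-1)^{N} u^{(2j+1)N}$ this contributes $-u^N v^N/(1-u^{2N})$. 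Adding the two pieces gives $(1-u^N v^N)/(1-u^{2N})$ for $|u| < 1$. Finally, both sides are rational functions of $u$ with poles only at $2N$-th roots of unity (the left side being a finite average of such rational functions), so the identity persists for all admissible $u$ by analytic continuation.

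I expect the only delicate points to be the degenerate cases of the Pieri expansion --- in particular the vanishing of $s_{(m,1^N)}$ at $k=N$, which is exactly what produces the $-u^N v^N$ term rather than a clean cancellation --- together with the sign bookkeeping arising from the interplay of the factor $(-1)^b$ in Lemma \ref{lem:hookschur} with the $(-u)^m$ coming from the expansion of $1/\det(1+ug)$. Everything else is routine symmetric-function and geometric-series manipulation.
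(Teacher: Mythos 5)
Your proposal is correct and follows essentially the same route as the paper's own proof: expand in elementary and complete homogeneous symmetric functions for $|u|<1$, reduce $\EEACUE[e_k h_m]$ to hook Schur expectations via the Pieri rule and Lemma \ref{lem:hookschur} (with the cancellation for $1\le k\le N-1$ and the surviving $k=0$ and $k=N$ terms, the latter because $s_{(m,1^N)}$ vanishes in $N$ variables), sum the geometric series, and conclude by analytic continuation. The sign bookkeeping and boundary cases you flag are handled correctly.
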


\begin{proof}
We first consider $|u|<1$. From a series expansion we have
\begin{equation}
\label{eq:ratio_expansion}
\frac{\det(1+vg)}{\det(1+ug)} = \sum_{j=0}^N \sum_{k=0}^\infty (-1)^k e_j h_k v^j u^k,
\end{equation}
where $e_j$ and $h_k$ are respectively elementary symmetric polynomials of degree $j$ and homogeneous symmetric polynomials of degree $k$ in the variables $e(\vartheta_1),...,e(\vartheta_N)$ associated to $\ACUE(N)$. Note that
$$
e_0 = h_0 = 1,
$$
while other terms can be expression in terms of Schur polynomials in the variables $e(\vartheta_1),...,e(\vartheta_N)$:
\begin{align*}
e_j h_0 &= s_{(1^j)} \quad \textrm{for}\; 1 \leq j \leq N,\\
e_0 h_k &= s_{(k)}  \quad \textrm{for}\; k \geq 1,\\
e_j h_k &= s_{(k+1,1^{j-1})} + s_{(k,1^j)}  \quad \textrm{for}\; 1 \leq j \leq N-1, \, k\geq 1,\\
e_N h_k &= s_{(k+1,1^{N-1})}  \quad \textrm{for}\; k\geq 1,
\end{align*}
with the first two identities following from the combinatorial definition of Schur functions \cite[Sec. 7.10]{St}, and the last two from the Pieri rule \cite[Thm. 7.15.7]{St}. From Lemma \ref{lem:hookschur} it thus follows
$$
\EEACUE e_j h_k = \begin{cases} 
	1 & \textrm{if}\; j = 0, \; k \equiv 0 \Mod{2N}, \\ 
	(-1)^{N-1} & \textrm{if}\; j = N, \; k \equiv N \Mod{2N}, \\
	0 & \textrm{otherwise}.
	\end{cases}
$$
Hence from \eqref{eq:ratio_expansion},
\begin{align*}
\EEACUE \frac{\det(1+vg)}{\det(1+ug)} & = (1+ u^{2N} + u^{4N} + \cdots) - (v^N u^N + v^N u^{3N} + v^N u^{5N} + \cdots) \\
&= \frac{1-u^N v^N}{1-u^{2N}},
\end{align*}
for $|u| < 1$ and all $v$. The result then follows by analytic continuation.
\end{proof}

Thus we have:

\begin{proof}[Proof of Theorem \ref{thm:ratios}]
Apply Proposition \ref{prop:1by1ratio} to Theorem \ref{thm:BOS}.
\end{proof}

\section{The moment formula: Theorem \ref{thm:main}}

Our technique in proving Theorem \ref{thm:main} will be to condense the determinants in \eqref{eq:ratios} by letting all $u_i\rightarrow 0$. We begin with several lemmas that are useful for that purpose.

The following is a slight generalization of Lemma 1 of \cite{Me}.

\begin{lemma}[Determinantal Condensation Identity]
\label{lem:condense}
Take $q \leq J$. For $f_1, f_2,...,f_J$ functions (mapping $\mathbb{R}$ to $\mathbb{C}$) that are at least $q$ times continuously differentiable at the point $a$,
\begin{equation}
\label{eq:condense}
\lim_{u_1,...,u_q \rightarrow a} \frac{1}{\Delta(u_q,...,u_1)}\det\big(f_j(u_i)\big)_{i,j=1}^J = \det \begin{pmatrix} \Big(\frac{1}{(i-1)!} f_j^{(i-1)}(a)\Big)_{i \leq q,\;\, j \leq J} \\ \big(f_j(u_i)\big)_{q+1 \leq i \leq J,\;\, j \leq J}\end{pmatrix},
\end{equation}
where on the left hand side the limit is taken in the order that first $u_1 \rightarrow u_2$, then $u_2 \rightarrow u_3$, ... , $u_{q-1}\rightarrow u_q$, and finally $u_q \rightarrow a$.
\end{lemma}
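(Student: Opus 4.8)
The plan is to recognize the left-hand side of \eqref{eq:condense} as a determinant whose first $q$ rows are \emph{divided differences}, and then let those rows converge to the Taylor rows on the right. Recall that for a function $f$ and distinct nodes $x_1,\dots,x_m$ the divided difference $f[x_1,\dots,x_m]$ may be defined recursively or, equivalently, by the partial-fraction formula $f[x_1,\dots,x_m]=\sum_{l=1}^{m} f(x_l)\big/\prod_{l'\neq l}(x_l-x_{l'})$. Two standard facts do all the work: (a) written as a linear combination of $f(x_1),\dots,f(x_m)$, the coefficient of $f(x_m)$ is $1/\prod_{l=1}^{m-1}(x_m-x_l)$; and (b) if $f$ is $C^{m-1}$ on a neighbourhood of $a$, then $(x_1,\dots,x_m)\mapsto f[x_1,\dots,x_m]$ extends to a continuous function on a neighbourhood of $(a,\dots,a)$ (e.g.\ via the Hermite--Genocchi integral representation), with value $f^{(m-1)}(a)/(m-1)!$ at $(a,\dots,a)$. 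The same reasoning handles the ``slight generalization'' over Lemma~1 of \cite{Me}: we condense only the first $q$ of the $J$ rows, to an arbitrary base point $a$.

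First I would fix \emph{distinct} $u_1,\dots,u_q$, the remaining $u_{q+1},\dots,u_J$ being frozen throughout, and perform row operations on $(f_j(u_i))_{i,j=1}^J$: for $i=1,\dots,q$ replace the $i$-th row by $\big(f_j[u_1,\dots,u_i]\big)_{j=1}^{J}$, leaving the rows with $i>q$ untouched. By fact (a) this is a \emph{lower-triangular} modification of the first $q$ rows, whose $i$-th diagonal entry is $1/\prod_{l=1}^{i-1}(u_i-u_l)$, so the determinant is multiplied by $\prod_{i=2}^{q}1/\prod_{l=1}^{i-1}(u_i-u_l)$. Since $\prod_{2\le i\le q}\prod_{1\le l<i}(u_i-u_l)=\prod_{1\le l<i\le q}(u_i-u_l)=\Delta(u_q,\dots,u_1)$ (the last equality being the paper's sign convention after reversing the argument list), this gives the exact identity
\[
\frac{1}{\Delta(u_q,\dots,u_1)}\det\big(f_j(u_i)\big)_{i,j=1}^{J}=\det\begin{pmatrix}\big(f_j[u_1,\dots,u_i]\big)_{i\le q,\;j\le J}\\[2pt]\big(f_j(u_i)\big)_{q+1\le i\le J,\;j\le J}\end{pmatrix},
\]
valid for all distinct $u_1,\dots,u_q$.

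Finally I would pass to the limit on the right-hand side. Letting $u_1\to u_2$, then $u_2\to u_3$, \dots, then $u_q\to a$ along paths keeping the nodes distinct until the end, fact (b) gives $f_j[u_1,\dots,u_i]\to f_j^{(i-1)}(a)/(i-1)!$ for each $i\le q$ and each $j$, while the rows with $i>q$ are unaffected; since the determinant is continuous in its entries, the right-hand side converges to the claimed matrix. The point needing genuine care is fact (b) — the continuity of $f[x_1,\dots,x_m]$ as the nodes coalesce, \emph{including} through the intermediate configurations in which several $u_i$ have already merged — which is classical (Hermite--Genocchi, or an iterated mean value theorem) but should be invoked explicitly, and this is exactly where the $C^q$ hypothesis is used. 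An alternative, more hands-on route avoiding divided differences is to induct on $q$, performing one confluence $u_1\to u_2$ at a time via a first-order Taylor expansion and a single row subtraction; this works, but the Vandermonde bookkeeping (repeated factors that appear and then cancel at each step) is noticeably messier than the one-shot argument above.
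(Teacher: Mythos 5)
Your proof is correct, but it takes a genuinely different route from the paper. The paper argues by induction on $q$: assuming the identity for $q-1$ condensed rows, it Taylor-expands the entries of row $q$ about $u_{q-1}$ to order $q$, uses multilinearity of the determinant to cancel the components of that row proportional to the already-condensed rows $1,\dots,q-1$, divides by the surviving factor $(u_q-u_{q-1})^{q-1}$ coming from the Vandermonde, and then lets $u_q\rightarrow a$. (So your closing aside slightly mischaracterizes that route: it needs a $(q-1)$-st order expansion and the subtraction of $q-1$ rows at each step, not a first-order expansion with a single row subtraction.) You instead establish an \emph{exact} pre-limit identity, rewriting the first $q$ rows as divided differences $f_j[u_1,\dots,u_i]$ via a triangular row operation whose diagonal factors reassemble precisely into $\Delta(u_q,\dots,u_1)$, and then pass to the limit using the continuity of (confluent) divided differences, e.g.\ via Hermite--Genocchi; your bookkeeping of the coefficient of $f(x_m)$ and of the sign convention $\prod_{1\le l<i\le q}(u_i-u_l)=\Delta(u_q,\dots,u_1)$ is correct. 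What your approach buys is that the identity holds before any limit is taken, so the convergence is reduced to joint continuity of the entries near the diagonal; in particular it makes essentially no use of the specific iterated order of limits and thus substantiates the paper's remark that the conclusion should be path-independent (under the mild caveat, shared by the paper's own argument, that the intermediate coalescence points must lie where the $f_j$ are sufficiently smooth, which is automatic in the paper's application to rational functions). What the paper's induction buys is elementarity: it needs nothing beyond Taylor's theorem and multilinearity, at the cost of being tied to the stated limit order.
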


\begin{proof}
We prove this identity by induction, viewing $\Delta(u_1) = 1$ for the $q = 1$ case, which then becomes trivial. Suppose then that \eqref{eq:condense} has been proved for a limit in $q-1$ variables. This implies for a limit in $q$ variables,
\begin{align*}
&\lim_{u_1,...,u_q \rightarrow a} \frac{1}{\Delta(u_q,...,u_1)}\det\big(f_j(u_i)\big)_{i,j=1}^J \\
&= \lim_{u_q\rightarrow a} \;\lim_{u_{q-1}\rightarrow u_q} \frac{1}{(u_q-q_{q-1})^{q-1}} \det \begin{pmatrix} \Big(\frac{1}{(i-1)!} f_j^{(i-1)}(u_{q-1})\Big)_{i \leq q-1, j \leq J} \\ \big(f_j(u_i)\big)_{i \geq q, j \leq J}\end{pmatrix}.
\end{align*}
But Taylor expanding the entries of row $q$ as
$$
f_j(u_q) = \sum_{i=1}^q \frac{f_j^{(i-1)}(u_{q-1})}{(i-1)!}  (u_q-u_{q-1})^{i-1} + O((u_q-u_{q-1})^{q}),
$$
and using multilinearity of the determinant to cancel out the first $q-1$ terms of the above sum in row $q$, the claimed result quickly follows.
\end{proof}

\begin{remark}
It is likely that a result of this sort remains true no matter the path along which a limit is taken (perhaps with further analytic conditions on the functions $f_j$), but we won't require that in what follows.
\end{remark}

\begin{remark} 
It is easy to see by permuting rows of the determinant that this result also implies
\begin{equation}
\label{eq:reverseorder_condense}
\lim_{u_1,...,u_q \rightarrow a} \frac{1}{\Delta(u_1,...,u_q)}\det\big(f_j(u_i)\big)_{i,j=1}^J = \det \begin{pmatrix} \Big(\frac{1}{(q-i)!} f_j^{(q-i)}(a)\Big)_{i \leq q, j \leq J} \\ \big(f_j(u_i)\big)_{i \geq q+1, j \leq J}\end{pmatrix}
\end{equation}
and
\begin{equation}
\label{eq:second_condense}
\lim_{u_{q+1},...,u_J \rightarrow a} \frac{1}{\Delta(u_J,...,u_{q+1})}\det\big(f_j(u_i)\big)_{i,j=1}^J = \det \begin{pmatrix} \big(f_j(u_i)\big)_{i \leq q, j \leq J} \\ \Big(\frac{1}{(i - q-1)!} f_j^{(i - q-1)}(a)\Big)_{i \geq q+1, j \leq J}\end{pmatrix},
\end{equation}
where in this last equation the limit is taken in the order $u_{q+1}\rightarrow u_{q+2}$,..., $u_{J-1}\rightarrow u_J$, $u_J \rightarrow a$.
\end{remark}

In applying this lemma we need the following computation.

\begin{lemma}
\label{lem:derivatives}
For integers $\ell \geq 0$ and $N\geq 1$,
$$
\lim_{u\rightarrow 0} \frac{1}{\ell!} \frac{d^\ell}{du^\ell} \Big(\frac{1}{u-v} \frac{1-u^N v^N}{1-u^{2N}}\Big) = -p_{N,\ell}(v),
$$
for $p_{N,\ell}$ defined by
\begin{equation}
\label{eq:p_def}
p_{N,\ell}(v) := \frac{1}{v^{\ell+1}} - v^{N-1} \, H_{N,\ell}(1/v) = \frac{1}{v^{\ell+1}} - \begin{cases} 0 & \textrm{if}\; 0 \leq [\ell]_{2N} \leq N-1 \\ v^{2N-1-[\ell]_{2N}} & \textrm{if}\; N \leq [\ell]_{2N} \leq 2N-1.\end{cases}
\end{equation}
\end{lemma}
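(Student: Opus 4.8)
The plan is to recognize that the quantity on the left is precisely the coefficient of $u^\ell$ in the Taylor expansion about $u=0$ of
\[
F(u):=\frac{1}{u-v}\cdot\frac{1-u^Nv^N}{1-u^{2N}}.
\]
For $v\neq 0$ (the natural range, since both sides have a pole at $v=0$, and the only case needed in the sequel) the function $F$ is holomorphic on the disc $|u|<\min(1,|v|)$, so $\frac{1}{\ell!}\frac{d^\ell}{du^\ell}F(u)\to[u^\ell]F$ as $u\to 0$, and I would extract this coefficient by multiplying out the absolutely convergent expansions
\[
\frac{1}{u-v}=-\sum_{m\geq 0}\frac{u^m}{v^{m+1}},\qquad \frac{1}{1-u^{2N}}=\sum_{j\geq 0}u^{2Nj},\qquad 1-u^Nv^N
\]
valid on that disc.

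First I would collect the last two factors into $\sum_{n\geq 0}\epsilon_n u^n$, where $\epsilon_n=1$ if $n\equiv 0\Mod{2N}$, $\epsilon_n=-v^N$ if $n\equiv N\Mod{2N}$, and $\epsilon_n=0$ otherwise. Forming the Cauchy product with $-\sum_{m\geq 0}u^m/v^{m+1}$ gives
\[
[u^\ell]F=-\sum_{\substack{0\le n\le\ell\\ n\equiv 0\,(2N)}}v^{\,n-\ell-1}\;+\;v^N\!\!\sum_{\substack{0\le n\le\ell\\ n\equiv N\,(2N)}}v^{\,n-\ell-1}.
\]
Writing $n=2Nk$ in the first sum (so $0\le k\le\lfloor\ell/2N\rfloor$) and $n=N(2k-1)$ in the second (so $1\le k\le\lfloor(\ell+N)/2N\rfloor$), every term becomes $v^{\,2Nk-\ell-1}$, with opposite signs in the two sums, so all contributions with $k$ in the common range $1\le k\le\lfloor\ell/2N\rfloor$ cancel.

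What survives is the $k=0$ term $-v^{-\ell-1}=-1/v^{\ell+1}$, together with a single leftover term exactly when $\lfloor(\ell+N)/2N\rfloor=\lfloor\ell/2N\rfloor+1$. Writing $\ell=2Nq+[\ell]_{2N}$, this occurs precisely when the length-$N$ interval $(\ell,\ell+N]$ contains the multiple $2N(q+1)$, i.e.\ when $[\ell]_{2N}\ge N$, and then the leftover term is $+v^{\,2N(q+1)-\ell-1}=+v^{\,2N-1-[\ell]_{2N}}$. Comparing with the second displayed form in \eqref{eq:p_def}, this yields $[u^\ell]F=-p_{N,\ell}(v)$ exactly as claimed. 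I do not expect a genuine obstacle: the only point requiring care is the bookkeeping of the cancellation, together with the elementary observation that an interval of length $N$ meets at most one multiple of $2N$, which is what makes the dichotomy on $[\ell]_{2N}$ exhaustive. If one wants the statement for all $v$, one simply notes both sides are rational in $v$ with a single pole at $v=0$.
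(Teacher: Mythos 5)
Your proposal is correct and follows the same strategy as the paper: both identify the limit with the coefficient of $u^\ell$ in the Taylor expansion of $F$ at $u=0$ and read that coefficient off from products of geometric series. The only difference is organizational — the paper first rewrites $\frac{1}{u-v}\,\frac{1-u^Nv^N}{1-u^{2N}} = \frac{1}{u-v} + \frac{u^N-v^N}{u-v}\cdot\frac{u^N}{1-u^{2N}}$ so the coefficient is visible with no cancellation, whereas you form the raw Cauchy product and carry out the telescoping cancellation by hand; both computations are valid.
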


\begin{proof}
Note that we have
\begin{multline*}
\frac{1}{u-v} \frac{1-u^N v^N}{1-u^{2N}} = - \frac{1}{v} \frac{1}{1-u/v} + \frac{u^N-v^N}{u-v} \frac{u^N}{1-u^{2N}} \\
= -\Big(\frac{1}{v} + \frac{u}{v^2} + \frac{u^2}{v^3}+\cdots \Big) + (v^{N-1} + v^{N-2} u + \cdots + u^{N-1})(u^N + u^{3N} + \cdots),
\end{multline*}
taking a series expansion around $u=0$. Since the quantity on the left hand side of the Lemma is exactly the coefficient of $u^\ell$ in this expansion, the claim follows by inspection.
\end{proof}

\begin{lemma}(Cauchy Determinant Formula)
\label{lem:Cauchy}
For $u_1,...,u_J$ and $v_1,...,v_J$ collections of complex numbers with no elements in common,
$$
\det\Big(\frac{1}{u_i - v_j}\Big)_{i,j=1}^J = \frac{\Delta(u_J,...,u_1) \Delta(v_1,...,v_J)}{\square(u;v)}
$$
where
$$
\square(u;v):= \prod_{i=1}^J \prod_{j=1}^J (u_i-v_j).
$$
\end{lemma}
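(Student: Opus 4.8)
The plan is to prove this by induction on $J$. The base case $J=1$ is immediate: the left-hand side is $1/(u_1-v_1)$, the Vandermonde of a single variable is the empty product $1$, and $\square(u;v) = u_1-v_1$, so both sides agree.

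For the inductive step I would reduce the $J\times J$ Cauchy matrix to the $(J-1)\times(J-1)$ one by elementary operations. First subtract the last column from each of columns $1,\dots,J-1$; since $\frac{1}{u_i-v_j}-\frac{1}{u_i-v_J} = \frac{v_j-v_J}{(u_i-v_j)(u_i-v_J)}$, column $j$ then carries a common factor $v_j-v_J$ and each row $i$ a common factor $\frac{1}{u_i-v_J}$, both of which I pull out of the determinant; the last column becomes the all-ones column. Next subtract the last row from each of rows $1,\dots,J-1$, using $\frac{1}{u_i-v_j}-\frac{1}{u_J-v_j} = \frac{u_J-u_i}{(u_i-v_j)(u_J-v_j)}$; this turns the last column into a vector with a single nonzero entry (a $1$ in the bottom-right corner), so expanding along it leaves a $(J-1)\times(J-1)$ determinant from which I pull out $u_J-u_i$ from row $i$ and $\frac{1}{u_J-v_j}$ from column $j$. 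What remains is exactly $\det\big(\tfrac{1}{u_i-v_j}\big)_{i,j=1}^{J-1}$, to which the inductive hypothesis applies.

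It then remains to check that the accumulated prefactors reassemble into the claimed ratio. The freshly created Vandermonde factors are supplied by the splittings $\Delta(u_J,u_{J-1},\dots,u_1) = \big(\prod_{i=1}^{J-1}(u_J-u_i)\big)\,\Delta(u_{J-1},\dots,u_1)$ and $\Delta(v_1,\dots,v_J) = \big(\prod_{j=1}^{J-1}(v_j-v_J)\big)\,\Delta(v_1,\dots,v_{J-1})$, while the pulled-out factors $\frac{1}{u_i-v_J}$ for $i<J$, together with $\frac{1}{u_J-v_j}$ for $j\le J$ and the denominator $\square$ coming from the inductive hypothesis, account for precisely the $2J-1$ new factors of $\square(u;v)$. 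I expect the only genuine obstacle to be sign bookkeeping: the orientation $\Delta(u_J,\dots,u_1)$ in the statement (rather than $\Delta(u_1,\dots,u_J)$) is exactly what makes the factors $u_J-u_i$ appear with the right sign, and the cofactor expansion along the last column contributes the benign sign $(-1)^{J+J}=+1$, so no stray signs should survive.

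An alternative, perhaps cleaner to state but with a fussier endgame, is the polynomial-identity route: clear denominators and observe that $\det\big(\tfrac{1}{u_i-v_j}\big)\cdot\square(u;v)$ is a polynomial which vanishes whenever two of the $u_i$ (resp.\ two of the $v_j$) coincide, hence is divisible by $\Delta(u_J,\dots,u_1)\Delta(v_1,\dots,v_J)$; a degree count in each variable then forces the quotient to be a constant, which one pins down to $1$ by a specialization (e.g.\ multiplying by $\prod_i(u_i-v_i)$ and letting $u_i\to v_i$ one index at a time, recovering the same recursion as above). I would present the inductive computation as the main proof since it is the most direct.
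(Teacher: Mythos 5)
Your argument is correct. Note, however, that the paper does not actually prove this lemma: its ``proof'' is a one-line citation to P\'olya--Szeg\H{o}, so any honest derivation is necessarily a different route. What you give is the classical inductive reduction, and the details check out: subtracting the last column produces the factors $v_j-v_J$ and $1/(u_i-v_J)$, subtracting the last row then isolates a single $1$ in position $(J,J)$ whose cofactor sign is $(-1)^{2J}=+1$, and the $2J-1$ new denominator factors $\{(u_i-v_J)^{-1}\}_i\cup\{(u_J-v_j)^{-1}\}_j$ together with the splittings $\Delta(u_J,\dots,u_1)=\prod_{i<J}(u_J-u_i)\,\Delta(u_{J-1},\dots,u_1)$ and $\Delta(v_1,\dots,v_J)=\prod_{j<J}(v_j-v_J)\,\Delta(v_1,\dots,v_{J-1})$ close the induction. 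In particular you are right that the reversed ordering $\Delta(u_J,\dots,u_1)$ in the statement (with the paper's convention $\Delta(x_1,\dots,x_N)=\prod_{j<k}(x_j-x_k)$, this equals $\prod_{i<k}(u_k-u_i)$) is exactly what absorbs the factors $u_J-u_i$ without a stray sign; one can confirm this against the $J=2$ case, where the determinant equals $(u_2-u_1)(v_1-v_2)/\square$. Your alternative polynomial-identity argument (divisibility by the two Vandermondes plus a degree count and normalization) is also standard and would work; the inductive computation is the more self-contained choice here.
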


\begin{proof}
See for instance \cite[Part 7, \S 1, Ex. 3]{PoSz}.
\end{proof}

We can now give a proof of the moment formula for ACUE.

\begin{proof}[Proof of Theorem \ref{thm:main}]
We set $u = (u_1,...,u_K)$ and $u' = (u'_1,...,u'_L)$ as abbreviations for ordered lists, and let $u \cup u':= (u_1,...,u_K, u'_1,...,u'_L)$ be an (ordered) concatenation of these lists. We abbreviate $\Delta(u) = \Delta(u_1,...,u_K)$ and also use the notation $\widetilde{\Delta}(u) = \Delta(u_K,...,u_1) = (-1)^{K(K-1)/2}\Delta(u)$.

Our starting point is the identity
\begin{multline}
\label{eq:tolimit}
\EEACUE\Big[\det(g)^{-K} \prod_{k=1}^{K+L}\det(1+v_k g)\Big] \\
= \lim_{u \rightarrow \infty} u_1^N \cdots u_K^N \lim_{u' \rightarrow 0} E_N(u \cup u'\, ; \, v),
\end{multline}
where we define
$$
E_N(u \cup u'\, ; \, v) := \EEACUE\Bigg[\frac{\prod_{k=1}^{K+L} \det(1+v_k g)}{\prod_{k=1}^K \det(1+u_k g) \prod_{\ell=1}^L \det(1+ u_\ell' g)}\Bigg].
$$
The limits $u \rightarrow \infty$ and $u' \rightarrow 0$ mean $u_1,...,u_K \rightarrow \infty$ and $u'_1,...,u'_L \rightarrow 0$. In what follows we will take these in the order $u'_1 \rightarrow u'_2$, ..., $u'_{L-1}\rightarrow u'_L$, $u'_L\rightarrow 0$ and $u_1 \rightarrow u_2$, ..., $u_{K-1}\rightarrow u_K$, $u_K\rightarrow \infty$ so that Lemma \ref{lem:condense} can easily be applied.

For notational reasons we write
$$
f_N(u,v):= \frac{1}{u-v}\frac{1-u^N v^N}{1-u^{2N}}.
$$
We use Theorem \ref{thm:ratios} and Lemma \ref{lem:Cauchy} to see,
\begin{align*}
E_N(u \cup u'\, ; \, v) &= \frac{\square(u \cup u'\, ; \, v)}{\widetilde{\Delta}(u\cup u') \Delta(v)} \det\begin{pmatrix} \Big(f_N(u_i,v_j)\Big)_{i\leq K, j \leq K+L} \\ \Big(f_N(u'_{i-K},v_j)\Big)_{i\geq K+1, j \leq K+L} \end{pmatrix} \\
&= \frac{\square(u \, ; \, v) \square(u' \, ; \, v)}{\widetilde{\Delta}(u) \widetilde{\Delta}(u') \square(u';u) \Delta(v)} \det\begin{pmatrix} \Big(f_N(u_i,v_j)\Big)_{i\leq K, j \leq K+L} \\ \Big(f_N(u'_{i-K},v_j)\Big)_{i\geq K+1, j \leq K+L} \end{pmatrix}.
\end{align*}
Taking a limit $u'\rightarrow 0$ and using Lemma \ref{lem:condense} -- in particular its consequence \eqref{eq:second_condense} -- and Lemma \ref{lem:derivatives},
$$
\lim_{u'\rightarrow 0} E_N(u \cup u'\, ; \, v) = \frac{\square(u;v) \prod_{k=1}^{K+L}(-v_k)^L}{\widetilde{\Delta}(u) \prod_{k=1}^K (-u_k)^L \Delta(v)} \det\begin{pmatrix} \Big(f_N(u_i,v_j)\Big)_{i\leq K, j \leq K+L} \\ \Big(-p_{N,i-K-1}(v_j)\Big)_{i\geq K+1, j \leq K+L} \end{pmatrix}.
$$
But note the easily verified functional equation
$$
f_N(u,v) = -f_N(u^{-1}, v^{-1}) v^{N-1} u^{-(N+1)}.
$$
Thus
\begin{multline}
\label{eq:almostdone}
u_1^N\cdots u_K^N \lim_{u'\rightarrow 0} E_N(u \cup u'\, ; \, v) \\
= (-1)^L \frac{\square(u;v) \prod_{k=1}^{K+L} v_k^L}{\widetilde{\Delta}(u) \Delta(v)} \prod_{k=1}^K u_k^{-L-1} \cdot\det\begin{pmatrix} \Big(- v_j^{N-1}f_N(u_i^{-1},v_j^{-1})\Big)_{i\leq K, j \leq K+L} \\ \Big(-p_{N,i-K-1}(v_j)\Big)_{i\geq K+1, j \leq K+L} \end{pmatrix}.
\end{multline}

For fixed $v$, we have as $u\rightarrow\infty$,
$$
\frac{\square(u;v)}{\widetilde{\Delta}(u)} \prod_{k=1}^K u_k^{-L-1} = \frac{\square(u;v) \prod_{k=1}^K u_k^{-L-1}}{\prod_{k=1}^K u_k^{K-1} \Delta\big(\frac{1}{u_1},...,\frac{1}{u_K}\big)} \sim \frac{1}{\Delta\big(\frac{1}{u_1},...,\frac{1}{u_K}\big)}.
$$
Applying Lemma \eqref{lem:condense} -- with its consequence \eqref{eq:reverseorder_condense} this time -- and Lemma \ref{lem:derivatives}, the limit of \eqref{eq:almostdone} as $u \rightarrow \infty$ is
\begin{align*}
& = (-1)^L \frac{\prod_{k=1}^{K+L} v_k^L}{\Delta(v)} \det\begin{pmatrix} \Big( v_j^{N-1}p_{N,K-i}(v_j^{-1})\Big)_{i\leq K, j \leq K+L} \\ \Big(-p_{N,i-K-1}(v_j)\Big)_{i\geq K+1, j \leq K+L} \end{pmatrix} \\
& = \frac{1}{\Delta(v)} \det\begin{pmatrix} \Big( v_j^{N+L-1}p_{N,K-i}(v_j^{-1})\Big)_{i\leq K, j \leq K+L} \\ \Big( v_j^L p_{N,i-K-1}(v_j)\Big)_{i\geq K+1, j \leq K+L} \end{pmatrix}.
\end{align*}
By inspection of matrix entries, the above is
$$
= \frac{\det\big(\phi_i(v_j)\big)_{i,j=1}^{K+L}}{\Delta(v)}.
$$
Recalling \eqref{eq:tolimit}, this is exactly what we sought to prove.
\end{proof}

\section{Hypothetical implications for ratios of $\zeta(s)$}
\label{sec:zetafunction}

Let us briefly and somewhat informally discuss these results in the context of the distribution of the Riemann zeta-function. For the sake of this discussion, suppose the Riemann Hypothesis is true, and label the nontrivial zeros of the zeta-function by $\{1/2+i\gamma_j\}_{j\in \mathbb{Z}}$, so that $\gamma_j \in \mathbb{R}$ for all $j$. What is widely believed about the local distribution of zeros concerns two point processes, the first point process (associated to a large parameter $T$) given by
\begin{equation}
	\label{eq:zeta_points}
	\Big\{ \frac{\log T}{2\pi}(\gamma_j-t)\Big\}_{j \in \mathbb{Z}}
\end{equation}
where $t \in [T,2T]$ is chosen randomly and uniformly, and the second point process (associated to a large parameter $N$) given by
\begin{equation}
\label{eq:CUE_pointprocess}
\{N\theta_i\}_{i=1,...,N}
\end{equation}
where $\theta_1,...,\theta_N \in [-1/2,1/2)$ are identified with the points $e(\theta_1),...,e(\theta_N)$ of $\CUE(N)$. The widely believed \emph{GUE Hypothesis} states that as $T\rightarrow\infty$ and $N\rightarrow\infty$ both point processes \eqref{eq:zeta_points} and \eqref{eq:CUE_pointprocess} tend to the same limiting point process. (This means that randomly generated configurations of points from these two processes will look similar near the origin of the real line.) 

The name GUE Hypothesis has historical origins; GUE refers to the \emph{Gaussian Unitary Ensemble}, an ensemble of random matrices which, like CUE, locally tends to this same limiting point process, but which was investigated earlier. Much of what is known rigorously about the limiting distribution of the points in \eqref{eq:zeta_points} is due to Montgomery \cite{Mo}, Hejhal \cite{He}, and Rudnick and Sarnak \cite{RuSa}, who showed that the correlation functions of the points in \eqref{eq:zeta_points} agree with those of this limiting distribution up to a band-limited resolution. Numerical work, beginning with that of Odlyzko \cite{Od}, has given further support to the GUE Hypothesis.

The ACUE was first investigated as one alternative model of how zeros of the Riemann zeta-function might be spaced. In particular, one considers the point process (associated to a large parameter $N$) given by
\begin{equation}
\label{eq:ACUE_pointprocess}
\{N\vartheta_i + \tfrac{r}{2}\}_{i=1,...,N}
\end{equation}
where $\vartheta_1,...,\vartheta_N \in \{-\tfrac{1}{2}, \tfrac{-N+1}{2N},\tfrac{-N+2}{2N},..., \tfrac{N-1}{2N}\}$ are identified with the points $e(\vartheta_1),...,e(\vartheta_N)$ of $\ACUE(N)$, and $r \in [0,1)$ is chosen  independently, and uniformly at random. As $N\rightarrow\infty$ the point process \eqref{eq:ACUE_pointprocess} tends to a limiting process, called the AH point process in \cite{LaRo1}. The AH point process has correlation functions which mimic the limiting process for CUE in a fashion akin to what is known rigorously about zeta zeros from the results of Montgomery, Hejhal, and Rudnick and Sarnak (see \cite{LaRo2} for further discussion), but it also has gaps between points which are always half-integers. In this way it is one possible -- though likely not a unique -- candidate for a limiting distribution of the zeta-function point process \eqref{eq:zeta_points} which is compatible with what is currently known about the local distribution of zeros of the zeta-function and also with the so-called Alternative Hypothesis, a (widely disbelieved) conjecture that gaps between zeros always occur close to half-integer multiples of the mean spacing. 

For this reason \cite{Ta} gave the name \emph{AGUE (Alternative Gaussian Unitary Ensemble) Hypothesis} to the hypothetical claim that as $T\rightarrow \infty$ the zeta zero point process \eqref{eq:zeta_points} tends to the AH point process. As one would like to rule out the Alternative Hypothesis, one would like to rule out the stronger AGUE Hypothesis. 

More details on the AH point process can be found in the references \cite{LaRo1, Ta}, while further information on the Alternative Hypothesis in general can be found in \cite{Ba}.

A major impetus for studying mixed moments of characteristic polynomials $\det(1+u G)$ for the CUE came from the work of Keating-Snaith, who used information about CUE moments to make a conjecture regarding moments of the Riemann zeta-function \cite[Eq. (19)]{KeSn}. As first observed by Tao and as discussed in the introduction, the consequence of Theorem \ref{thm:Tao} that for sufficiently large $N$ mixed moments in the CUE and ACUE agree suggests that even should the zeros of the Riemann zeta-function be spaced according to the pattern of the ACUE, this could still be consistent with the Keating-Snaith moment conjecture.

The local spacing of zeros of the Riemann zeta-function is also closely related to the averages of ratios of shifts of the Riemann zeta function near the critical line. This perspective was first pursued by Farmer \cite{Fa1, Fa2} and has subsequently been investigated by others \cite{CoFaZi,CoSn1, CoSn2}. In particular note that from Theorem \ref{thm:ratios_CUE},
\begin{equation*}
\lim_{N\rightarrow\infty}\EECUE \Bigg[\prod_{j=1}^J\frac{\det(1 - e^{-\nu_j/N} G)}{\det(1 - e^{-\mu_j/N} G)} \Bigg]
= \frac{1}{\det\Big(\frac{1}{\nu_j - \mu_i}\Big)} \det\left(\frac{1}{\nu_j-\mu_i} e(\mu_i,\nu_j)\right),
\end{equation*}
for $\Re\, \mu_j \neq 0$ for all $j$, where
$$
e(\mu,\nu) := \begin{cases} 1 & \textrm{if}\; \Re\, \mu > 0 \\ e^{\mu-\nu} & \textrm{if}\; \Re\, \mu < 0.\end{cases}
$$

From the results proved in \cite{CoSn2,Ro} it can be seen that the claim
\begin{equation}
\label{eq:zetaratios}
\lim_{T\rightarrow\infty}\frac{1}{T}\int_T^{2T} \prod_{j=1}^J \frac{\zeta(1/2 + \nu_j/\log T + it)}{\zeta(1/2 + \mu_j/\log T + it)} \, dt = \frac{1}{\det\Big(\frac{1}{\nu_j - \mu_i}\Big)} \det\left(\frac{1}{\nu_j-\mu_i} e(\mu_i,\nu_j)\right)
\end{equation}
for $\Re\, \mu_j \neq 0$ for all $j$, is equivalent to the GUE Hypothesis. (In fact \cite{Ro} treats only real $\mu_j, \nu_j$, but the method can be adapted to complex values. There is a notational difference in \cite{Ro}; the function $E$ used there satisfies $E(\nu,\mu) = e(\mu,\nu)$ for the function $e$ used here.)

A belief in the AGUE Hypothesis would suggest that we replace characteristic polynomials $\det(1-u G)$ as they appear above by $\det(1- u e^{i 2\pi r/2N} g)$, where $r\in[0,1)$ is independent of $g$ and uniformly chosen.
For the ACUE, from Theorem \ref{thm:ratios} we have
\begin{equation*}
\lim_{N\rightarrow\infty} \EEACUE \Bigg[\prod_{j=1}^J\frac{\det(1 - e^{-\nu_j/N} g)}{\det(1 - e^{-\mu_j/N} g)} \Bigg]
= \frac{1}{\det\Big(\frac{1}{\nu_j - \mu_i}\Big)} \det\left(\frac{1}{\nu_j-\mu_i} \mathfrak{e}(\mu_i,\nu_j)\right),
\end{equation*}
for $\mu_j \notin \frac{i}{2}\mathbb{Z}$ for all $j$, where
$$
\mathfrak{e}(\mu,\nu) := \frac{1 - e^{-\mu-\nu}}{1 - e^{-2\mu}}.
$$
Hence on the assumption of the AGUE Hypothesis, one should instead expect for $\Re \, \mu_j \neq 0$ for all $j$,
\begin{multline}
\label{eq:alternative_zetaratios}
\lim_{T\rightarrow\infty}\frac{1}{T}\int_T^{2T} \prod_{j=1}^J \frac{\zeta(1/2 + \nu_j/\log T + it)}{\zeta(1/2 + \mu_j/\log T + it)}\, dt \\
= \lim_{N\rightarrow\infty} \int_0^1 \EEACUE \Bigg[\prod_{j=1}^J\frac{\det(1 - e^{-\nu_j/N}e^{i\pi r/N} g)}{\det(1 - e^{-\mu_j/N} e^{i\pi r/N} g)} \Bigg]\, dr\\
=  \int_0^1 \frac{1}{\det\Big(\frac{1}{\nu_j - \mu_i}\Big)} \det\left(\frac{1}{\nu_j-\mu_i} \mathfrak{e}(\mu_i-i\pi r,\nu_j-i\pi r)\right) dr \\
= \frac{1}{\det\Big(\frac{1}{\nu_j - \mu_i}\Big)}\int_{|z|=1} \det\left(\frac{1}{\nu_j-\mu_i} \frac{1 - z e^{-\mu_i-\nu_j}}{1 - z e^{-2\mu_i}}\right) \frac{dz}{z}.
\end{multline}

\eqref{eq:alternative_zetaratios} is of course a different expression than \eqref{eq:zetaratios}. Thus for averages of ratios of the Riemann zeta-function, an ACUE spacing would be distinguished from CUE spacing. In fact using the methods of \cite{CoSn2,Ro} it should be possible to demonstrate rigorously that \eqref{eq:alternative_zetaratios} is equivalent to the AGUE Hypothesis, but we do not pursue this here.

\Addresses

\end{document}